\newcommand{\dis}{d}
\newcommand{\DIST}{\mathcal{D}}
\newcommand{\Ee}{\mathcal{E}}
\newcommand{\Ff}{\mathcal{F}}
\newcommand{\Bb}{\mathcal{B}}
\newcommand{\Gg}{\mathcal{G}}
\newcommand{\Pp}{\mathcal{P}}
\newcommand{\sPp}{\sem{\Pp}}
\newcommand{\Ss}{\mathcal{S}}
\newcommand{\sem}[1]{ [ \! [ {#1}  ]  \! ]} 
\def\rmdef{\stackrel{\mbox{\rm {\tiny def}}}{=}} 
\newcommand\vd[2]{d_{i, p}}
\newcommand\vy{\mathbf{y}}
\newcommand{\set}[1]{\left\{ #1 \right\}}
\newcommand{\seq}[1]{\langle #1 \rangle}
\newcommand{\R}{\mathbb R}
\newcommand{\Real}{\R}
\newcommand{\Rplus}{\R_{\geq 0}}
\newcommand{\SE}{\textsf{SE}}
\newcommand{\GE}{\textsf{GE}}
\newcommand{\MG}{\textsf{mGE}}
\newcommand{\toolname}{\textsc{Schmit}\xspace}
\theoremstyle{definition}
\newtheorem{exmp}{Example}[section]
\numberwithin{exmp}{section}
\definecolor{gold}{rgb}{0.99,0.78,0.07}
\tikzstyle{dtreenode}=[draw=blue!10!gray,rounded rectangle, minimum size=5mm,fill=blue!10!white]
\tikzstyle{dtreeleaf}=[draw=black!60,minimum width=1cm,minimum height=0.4cm,rectangle,fill=blue!50!white]
\tikzset{every loop/.style={looseness=7}}
\tikzset{
	gluon/.style={decorate,draw=black,
		decoration={coil,amplitude=1pt, segment length=5pt}}
}
\tikzset{
	gluon1/.style={decorate,draw=black,
		decoration={coil,amplitude=3pt, segment length=3pt}}
}
\tikzset{
	gluonew/.style={decorate,draw=black,
		decoration={coil,amplitude=1pt, segment length=2pt}}
}
\tikzset{bicolor/.style args={#1 and #2 and #3}{
		path picture={
			\tikzset{rounded corners=0}
			\fill [#1] (path picture bounding box.south west)
			rectangle
			($(path picture  bounding box.north west)!#3!(path picture bounding
			box.north east)$);
			\fill [#2]
			($(path picture bounding box.south west)!#3!(path picture bounding
			box.south east)$)
			rectangle (path picture bounding box.north east);
}}}
\tikzset{tricolor/.style args={#1 and #2 and #3 and #4 and #5}{
		path picture={
			\tikzset{rounded corners=0}
			\fill [#1] (path picture bounding box.south west)
			rectangle
			($(path picture  bounding box.north west)!#4!(path picture bounding
			box.north east)$);
			\fill [#2]
			($(path picture bounding box.south west)!#4!(path picture bounding
			box.south east)$)
			rectangle
			($(path picture  bounding box.north west)!#5!(path picture bounding
			box.north east)$);
			\fill [#3]
			($(path picture bounding box.south west)!#5!(path picture bounding
			box.south east)$)
			rectangle (path picture bounding box.north east);
}}}
 \definecolor{dkgreen}{rgb}{0,0.6,0}
 \definecolor{gray}{rgb}{0.5,0.5,0.5}
 \definecolor{mauve}{rgb}{0.58,0,0.82}
\tiny\color{gray},
\definecolor{cadmiumgreen}{rgb}{0.0, 0.42, 0.24}
\definecolor{verde}{rgb}{0.25,0.5,0.35}
\definecolor{jpurple}{rgb}{0.5,0,0.35}
\definecolor{darkgreen}{rgb}{0.0, 0.2, 0.13}
\newsavebox{\mybox}
\begin{document}

\title{Quantitative Mitigation of Timing Side Channels}
%
%

\author{
	Saeid Tizpaz-Niari
	\and
	Pavol {\v C}ern\'y
	\and
	Ashutosh Trivedi
}
\institute{University of Colorado Boulder}
\authorrunning{Tizpaz-Niari, {\v C}ern\'y, and Trivedi}
\maketitle
\begin{abstract}
  Timing side channels pose a significant threat to the security and privacy of
  software applications.  We propose an approach for {\em mitigating} this
  problem by decreasing the strength of the side channels as measured by
  entropy-based objectives, such as min-guess entropy. Our goal is to minimize
  the information leaks while guaranteeing a user-specified maximal acceptable
  performance overhead.
  We dub the decision version of this problem {\em Shannon mitigation}, and
  consider two variants, {\em deterministic} and {\em stochastic}. First, we
  show that the deterministic variant  is {\sc NP}-hard. However, we give a
  polynomial algorithm that finds an optimal solution from a restricted set.
  Second, for the stochastic variant, we develop an approach that uses
  optimization techniques specific to the entropy-based objective used. For
  instance, for min-guess entropy, we used mixed integer-linear programming.
  We apply the algorithm to a threat model where the attacker gets to make {\em
  functional observations}, that is, where she observes the running time of the
  program for the same secret value combined with different public input values.
  Existing mitigation approaches do not give confidentiality or
  performance guarantees for this threat model.
  We evaluate our tool \toolname on a number of micro-benchmarks and real-world
  applications with different entropy-based objectives. In contrast to the
  existing mitigation approaches, we show that in the functional-observation
  threat model, \toolname is scalable and able to maximize confidentiality under
  the performance overhead bound.
\end{abstract}

\section{Introduction}
\label{sec:intro}
Information leaks through timing side channels remain a challenging problem
\cite{padlipsky1978limitations,lampson1973note,kocher1996timing,dhem1998practical,brumley2005remote,yarom2017cachebleed,phan2017synthesis}.
A program leaks secret information through timing side channels if an
attacker can deduce secret values (or their properties) by observing response
times. We consider the problem of mitigating timing side channels. Unlike
elimination techniques
\cite{agat2000transforming,molnar2005program,wu2018eliminating} that aim to
completely remove timing leaks without considering the performance penalty, the
goal of mitigation techniques
~\cite{kopf2009provably,askarov2010predictive,zhang2011predictive} is to weaken
the leaks, while keeping the penalty low.

We define the {\em Shannon mitigation} problem that decides whether there is a
mitigation policy to achieve a lower bound on a given security entropy-based
measure while respecting an upper bound on the performance overhead. Consider an
example where the program-under-analysis has a secret variable with seven
possible values, and has three different timing behaviors, each forming a
cluster of secret values. It takes $1$ second if the secret value is $1$, it takes $5$
seconds if the secret is between $2$ and $5$, and it takes $10$ seconds if the
secret value is $6$ or $7$. The {\em entropy-based measure} quantifies
the remaining uncertainty about the secret after timing observations. Min-guess
entropy~\cite{KB07,smith2009foundations,backes2009automatic} for this program
is $1$, because if the observed execution time is $1$,
the attacker guesses the secret in one try. A {\em mitigation policy} involves
merging some timing clusters by introducing delays. A good solution might be to
introduce a $9$ second delay if the secret is $1$, which merges two timing clusters.
But, this might be disallowed by the budget on the performance overhead. Therefore,
another solution must be found, such as introducing a $4$ seconds delay when the secret
is one.

We develop two variants of the Shannon mitigation problem: {\em deterministic} and
{\em stochastic}. The mitigation policy of the deterministic variant requires us
to move all secret values associated to an observation to another observation,
while the policy of the stochastic variant allows us to move only a portion of
secret values in an observation to another one. We show that the deterministic
variant of the Shannon mitigation problem is intractable and propose a dynamic
programming algorithm to approximate the optimal solution for the problem by
searching through a restricted set of solutions.
We develop an algorithm that reduces the problem in the stochastic variant
to a well-known optimization problem that depends on the entropy-based measure.
For instance, with min-guess entropy, the optimization problem is mixed integer-linear programming.

We consider a threat model where an attacker knows the public inputs
(known-message attacks~\cite{kopf2009provably}), and furthermore, where the public
input changes much more often than the secret inputs (for instance, secrets such as
bank account numbers do not change often). As a result, for each secret, the
attacker observes a timing function of the public inputs.
We call this model {\em functional observations} of timing side channels.

We develop our tool \toolname that has three components: side channel
discovery~\cite{FuncSideChan18}, search for the
mitigation policy, and the policy enforcement. The side channel discovery
builds the functional observations~\cite{FuncSideChan18}
and measures the entropy of secret set after the observations.
The mitigation policy component includes
the implementation of the dynamic programming and optimization algorithms.
The enforcement component is a monitoring system that uses the program
internals and functional observations to enforce the policy at runtime.

\noindent To summarize, we make the following contributions:
\begin{compactitem}
	\item We formalize the \emph{Shannon mitigation} problem with two
	variants and show that the complexity of finding deterministic
	mitigation policy is NP-hard.
	\item We describe two algorithms for synthesizing the mitigation policy:
	one is based on dynamic programming for the deterministic variant,
	that is in polynomial time and results in an approximate
	solution, and the other one solves the stochastic variant of the
	problem with optimization techniques.
	\item We consider a threat model that results in functional observations.
	On a set of micro-benchmarks, we show that
	existing mitigation techniques are not secure and efficient for this threat model.
	\item We evaluate our approach on five
	real-world Java applications. We show that \toolname
	is scalable in synthesizing mitigation policy within a few seconds and
	significantly improves the security (entropy) of the
	applications.
\end{compactitem}

\section{Overview}
\label{sec:overview}

First, we describe the threat model considered in this paper. Second, we
describe our approach on a running example. Third, we
compare the results of \toolname with the existing mitigation techniques
\cite{kopf2009provably,askarov2010predictive,zhang2011predictive} and show
that \toolname achieves the highest entropy (i.e., best mitigation) for all
three entropy objectives.

\noindent \textbf{Threat Model. }
\label{sec:threat}
We assume that the attacker has access to the source code
and the mitigation model, and she can sample the
run-time of the application arbitrarily many times on
her own machine. During an attack, she intends to guess a
fixed secret of the target machine by observing the mitigated
running time. Since we consider the attack models where the
attacker knows the public inputs and the secret inputs are less
volatile than public inputs, her
observations are functional observations, where for each secret value,
she learns a function from the public inputs to the running time.

\begin{figure*}[!t]
	\centering
	\begin{minipage}{0.5\textwidth}
		\begin{lstlisting}[frame=none]
		Example(int high, int low) {
		int t_high = high, t_low = low;
			while (t_high > 0) {
				if (t_high % 2 == 1) {
					while (t_low > 0) {
						if (t_low % 2 == 1) {
							res += compute(t_low,t_high);}
						t_low = t_low >> 1;}}
				t_high = t_high >> 1;}
			return res;}
		\end{lstlisting}
	\end{minipage}
	\hfill
	\begin{minipage}{0.45\textwidth}
	\includegraphics[width=\textwidth]{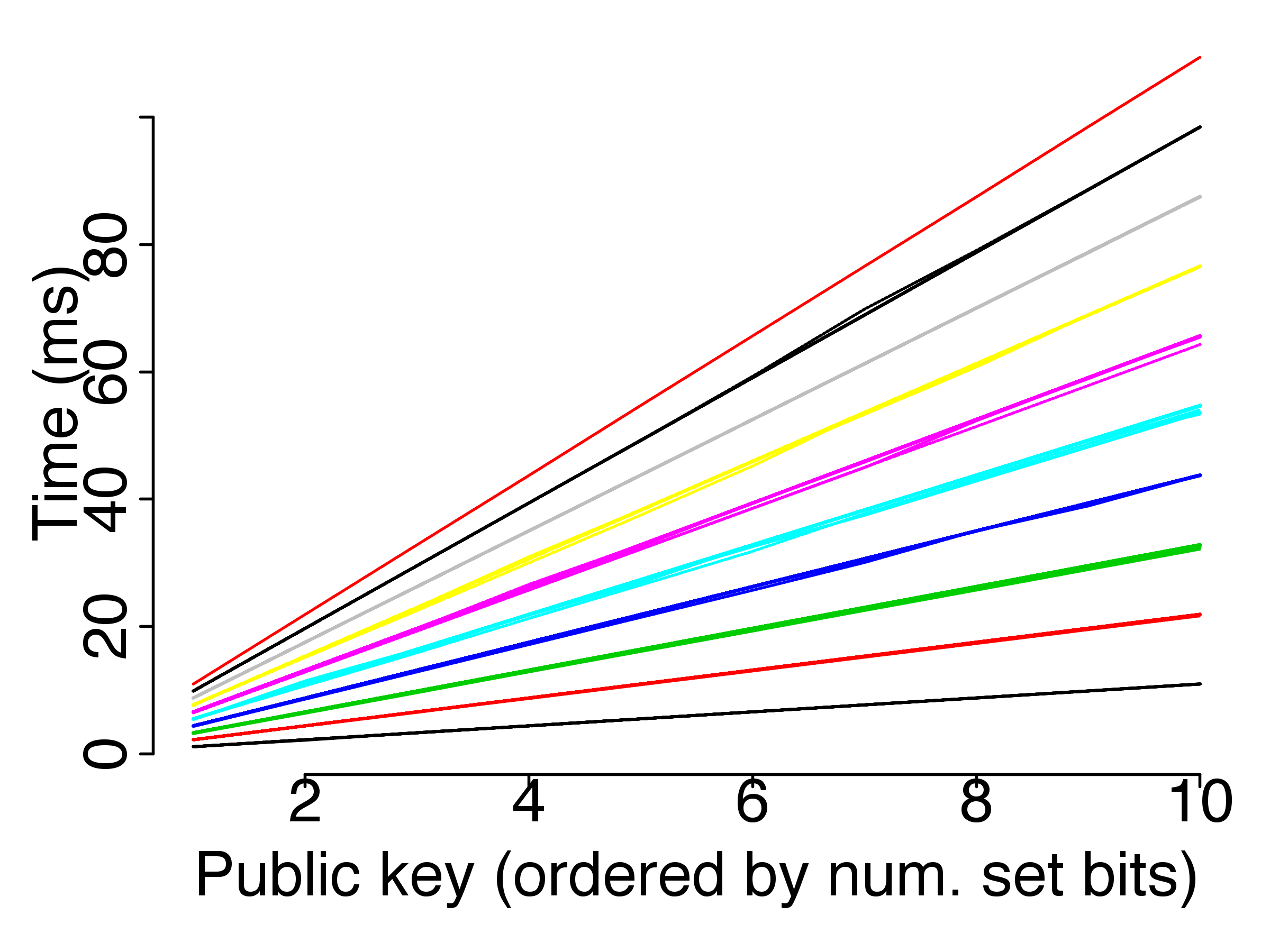}
	\end{minipage}
	\caption{ (a) The example used in Section~\ref{sec:overview}. (b) The timing functions for each secret value of the program.}
\label{fig:program-overview}
\end{figure*}

\begin{exmp}
Consider the program shown in Fig~\ref{fig:program-overview}(a).
It takes secret and public values as inputs. The running time depends on the number of set bits in both
secret and public inputs. We assume that secret and public inputs
can be between 1 and 1023.
Fig~\ref{fig:program-overview}(b) shows the running time of different
secret values as timing functions, i.e.,
functions from the public inputs to the running time.
\label{exp:overview}
\end{exmp}

\noindent \textbf{Side channel discovery.}
One can use existing tools to find the initial functional observations
~\cite{FuncSideChan18,aaai18}. In Example~\ref{exp:overview},
functional observations are $\Ff$ = $\seq{y, 2y$, $\ldots, 10y}$
where $y$ is a variable whose value is
the number of set bits in the public input.
The corresponding secret classes after this observation is
$\Ss_{\Ff} = \seq{1_1, 1_2, 1_3, \dots, 1_{10}}$
where $1_n$ shows a set of secret values that have $n$ set bits.
The sizes of classes are $B = \set{10,45,120,210,252,210,120,45,10,1}$.
We use $L_1$-norm as metric to calculate the distance between the
functional observations $\Ff$. This distance (penalty) matrix specifies
extra performance overhead to move from one functional observation to another.
With the assumption of uniform distributions over the secret input,
Shannon entropy, guessing entropy, and the
min-guessing entropy are 7.3, 90.1, and 1.0, respectively.
These entropies are defined in Section~\ref{sec:def} and
measure the remaining entropy
of the secret set after the observations. We aim to maximize
the entropy measures, while keeping the performance overhead
below a threshold, say 60\% for this example.

\noindent \textbf{Mitigation with \toolname.}
We use our tool \toolname to mitigate timing leaks of Example~\ref{exp:overview}.
The mitigation policy for the Shannon entropy
objective is shown in~Fig~\ref{fig:schmit}(a).
The policy results in two classes of observations.
The policy requires to move functional observations
$\seq{y, 2y, \ldots, 5y}$ to $\seq{6y}$ and all other
observations $\seq{7y, 8y, 9y}$ to $\seq{10y}$.
To enforce this policy, we use a monitoring system at runtime.
The monitoring system uses a decision tree model of the initial
functional observations. The decision tree model characterizes each functional
observation with associated program internals such as method
calls or basic block invocations~\cite{aaai18,tizpaz2017discriminating}.
The decision tree model for the Example~\ref{exp:overview} is shown
in Fig~\ref{fig:schmit}(b).
The monitoring system records program
internals and matches it with the decision tree model to
detect the current functional observation.
Then, it adds delays, if necessary,
to the execution time in order to enforce the mitigation policy.
With this method, the mitigated functional observation is
$\Gg$ = $\seq{6y, 10y}$ and the secret class is $\Ss_{\Gg} = \seq{\{1_1,1_2,1_3,1_4,1_5,1_6\},\{1_7,1_8,1_9,1_{10}\}}$
as shown in Fig~\ref{fig:schmit} (c). The performance overhead of
this mitigation is 43.1\%.
The Shannon, guessing, and min-guess entropies have improved
to 9.7, 459.6, and 193.5, respectively.

\begin{figure}[t!]
	\begin{minipage}{0.22\textwidth}
		\scalebox{0.5}{
			\begin{tikzpicture}[->,>=stealth',shorten >=1pt,auto,node distance=1.5cm,thick,
			draw = black!60, fill=black!60]
			\tikzstyle{every state}=[fill=blue!10!gray,draw=none,text=white]

			\node[state] 		 (A)                              {$C_{10}$};
			\node[state] 		 (B) [below of=A]         {$C_{x2}$};
			\node[state]         (C) [below of=B]         {$C_7$};
			\node[state]         (D) [below of=C]         {$C_6$};
			\node[state]         (E) [below of=D]          {$C_{x1}$};
			\node[state]         (F) [below of=E]         {$C_1$};

			\path (F) edge  [bend right=25]  node[midway,right] {\small $\mu(1,6) = 1.0$} (D)
			(E) edge node[midway, left] {\small $\mu(x1,6) = 1.0$} (D)
			(D) edge  [loop left]   node {\small $\mu(6,6) = 1.0$} (D)
			(C) edge [bend right=25] node[midway, right] {\small $\mu(7,10) = 1.0$} (A)
			(B) edge node[midway, left] {\small $\mu(x2,10) = 1.0$} (A)
			(A) edge [loop left]   node {\small $\mu(10,10) = 1.0$} (A);
			\end{tikzpicture}
		}
	\end{minipage}
	\begin{minipage}{0.35\textwidth}
		\scalebox{0.50}{
			\begin{tikzpicture}[align=center,node distance=0.8cm,->,thick,
			draw = black!60, fill=black!60]
			\centering
			\pgfsetarrowsend{latex}
			\pgfsetlinewidth{0.3ex}
			\pgfpathmoveto{\pgfpointorigin}

			\node[dtreenode,initial above,initial text={}] at (0,0) (l0)  {
				modExp\_bblock\_16};
			\node[dtreenode,below left=of l0] (l1)
			{modExp\_bblock\_16};
			\node[dtreenode,below right=of l0] (l2)
			{modExp\_bblock\_16};
			\node[dtreenode,below =of l1] (l3)
			{modExp\_bblock\_16};
			\node[dtreenode,below=of l2] (l6)
			{modExp\_bblock\_16};
			\node[below =of l3] (l7)
			{};
			\node[below =of l6] (l8)
			{};

			\node[dtreeleaf,bicolor={cyan and cyan and 0.99},below right=of
			l1] (l4) {};
			\node[dtreeleaf,bicolor={purple and purple and 0.99},below left=of l2]
			(l5) {};

			\path[->]  (l0) edge  node [left,pos=0.4] {$ <= 5.0 * y ~~$} (l1);
			\path  (l0) edge  node [right, pos=0.4] {$~~ > 5.0 * y $} (l2);
			\path  (l1) edge  node [left] {$~~ <= 4.0 * y $} (l3);
			\path  (l1) edge  node [right] {$~~ > 4.0 * y $} (l4);
			\path  (l2) edge  node [left] {$~~ <= 6.0 * y $} (l5);
			\path  (l2) edge  node [right] {$~~ > 6.0 * y $} (l6);
			\path  (l3) edge[dotted]  node [right] {} (l7);
			\path  (l6) edge[dotted]  node [right] {} (l8);
			\end{tikzpicture}
		}
	\end{minipage}
	\hfil
	\begin{minipage}{0.32\textwidth}
		\includegraphics[width=\textwidth]{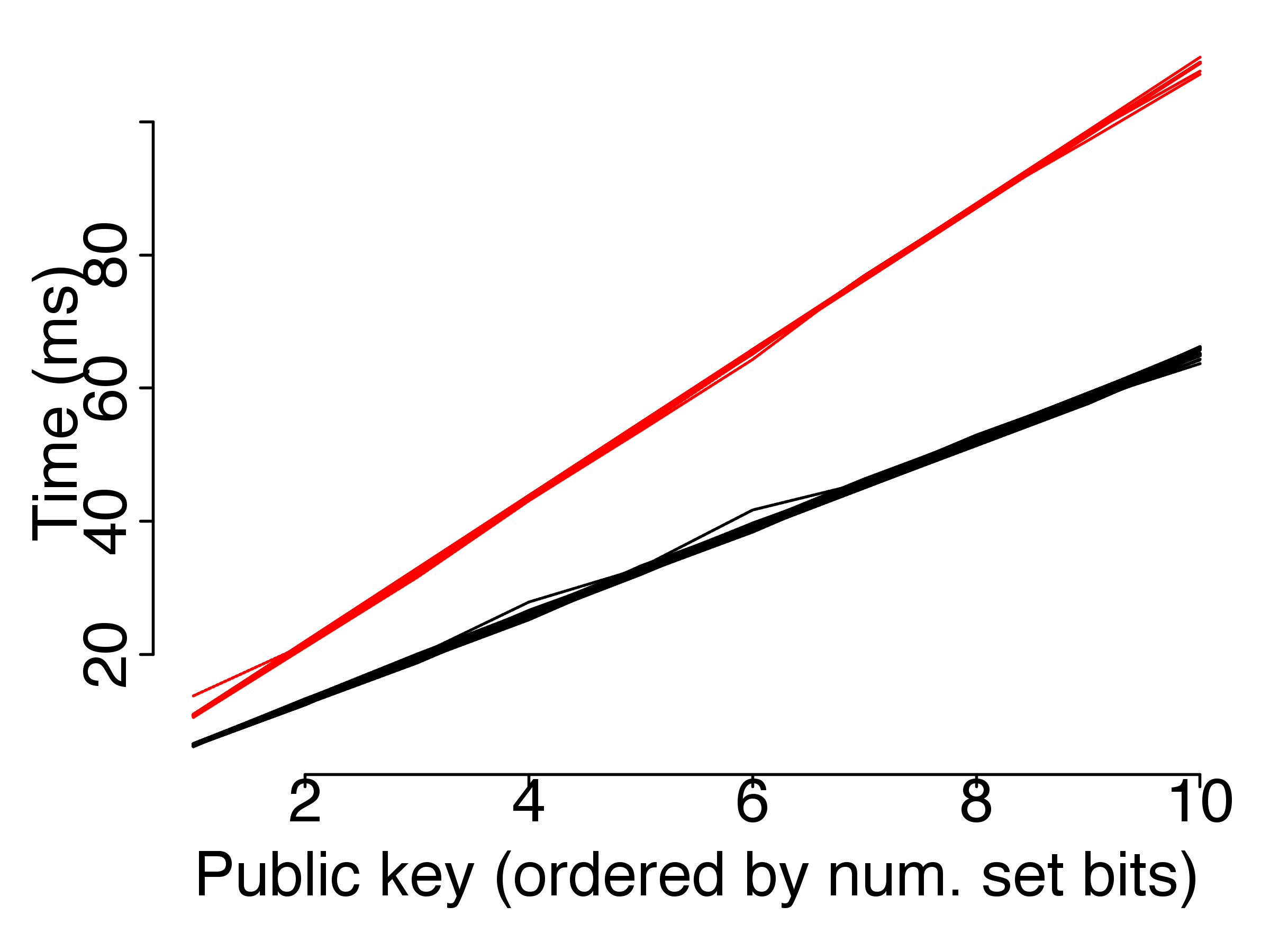}
	\end{minipage}
	\caption{
		(a)Mitigation policy calculation with deterministic
		algorithm (left). The observations $x1$ and $x2$ stands for all observations from $C_2{-}C_5$ and from $C_8{-}C_9$, resp.;
		(b) Leaned discriminant decision tree (center): it characterizes the functional clusters of Fig.~\ref{fig:program-overview}(b)
		with internals of the program in Fig.~\ref{fig:program-overview}(a); and
		(c) observations (right) after the mitigation by \toolname results in two classes of observations.
	}
	\label{fig:schmit}
\end{figure}

\noindent \textbf{Comparison with state of the art.}
We compare our mitigation results to black-box mitigation
scheme~\cite{askarov2010predictive} and
bucketing~\cite{kopf2009provably}.
\textit{Black-box double scheme technique.}
We use the double scheme technique~\cite{askarov2010predictive}
to mitigate the leaks of Example~\ref{exp:overview}.
This mitigation uses a prediction model to release
events at scheduled times.
Let us consider the prediction for releasing the event $i$ at $N$-th
epoch with $S(N,i)$ = $\max(inp_i, S(N,i{-}1)) {+} p(N)$,
where $inp_i$ is the time arrival of the $i$-th request, $S(N,i-1)$
is the prediction for the request $i{-}1$, and $p(N) = 2^{N-1}$
models the basis for the prediction scheme at $N$-th epoch.
We assume that the request are the same type and
the sequence of public input requests for each secret are
received in the beginnig of epoch $N=1$.
Fig~\ref{fig:black-box}(a) shows the functional observations
after applying the predictive mitigation.
With this mitigation, the classes of
observations are
$\Ss_{\Gg} = \seq{1_1,\{1_2,1_3\},\{1_4, 1_5,1_6,1_7\},\{1_8, 1_9,1_{10}\}}$.
The number of classes of observations
is reduced from 10 to 4. The performance overhead is 39.9\%.
The Shannon, guessing, and min-guess entropies have increased to
9.00, 321.5, and 5.5, respectively.
\textit{Bucketing.}
We consider the mitigation approach with buckets~\cite{kopf2009provably}.
For Example~\ref{exp:overview}, if the attacker does not know the public input
(unknown-message attacks~\cite{kopf2009provably}), the
observations are $\{1.1, 2.1, 3.3,\cdots, 9.9,10.9,\cdots,109.5\}$ as shown
in Fig~\ref{fig:bucket}(b). We apply the bucketing algorithm in~\cite{kopf2009provably}
for this observations, and it finds two buckets $\{37.5, 109.5\}$ shown with
the red lines in~Fig~\ref{fig:bucket}(b). The bucketing mitigation requires to move the
observations to the closet bucket. Without functional observations, there are 2
classes of observations. However, with functional observations,
there are more than 2 observations. Fig~\ref{fig:bucket}(c) shows
how the pattern of observations are leaking through functional side channels.
There are 7 classes of observations:
$\Ss_{\Gg} = \seq{\{1_1,1_2,1_3\},\{1_4\},\{1_5\},\{1_6\},\{1_7\},\{1_8\},\{1_9\},\{1_{10}\}}$.
The Shannon, guessing, and min-guess
entropies are 7.63, 102.3, and 1.0, respectively.
Overall, \toolname achieves the higher entropy measures for
all three objectives under the performance overhead of 60\%.

\begin{figure}[t!]
	\includegraphics[width=0.32\textwidth]{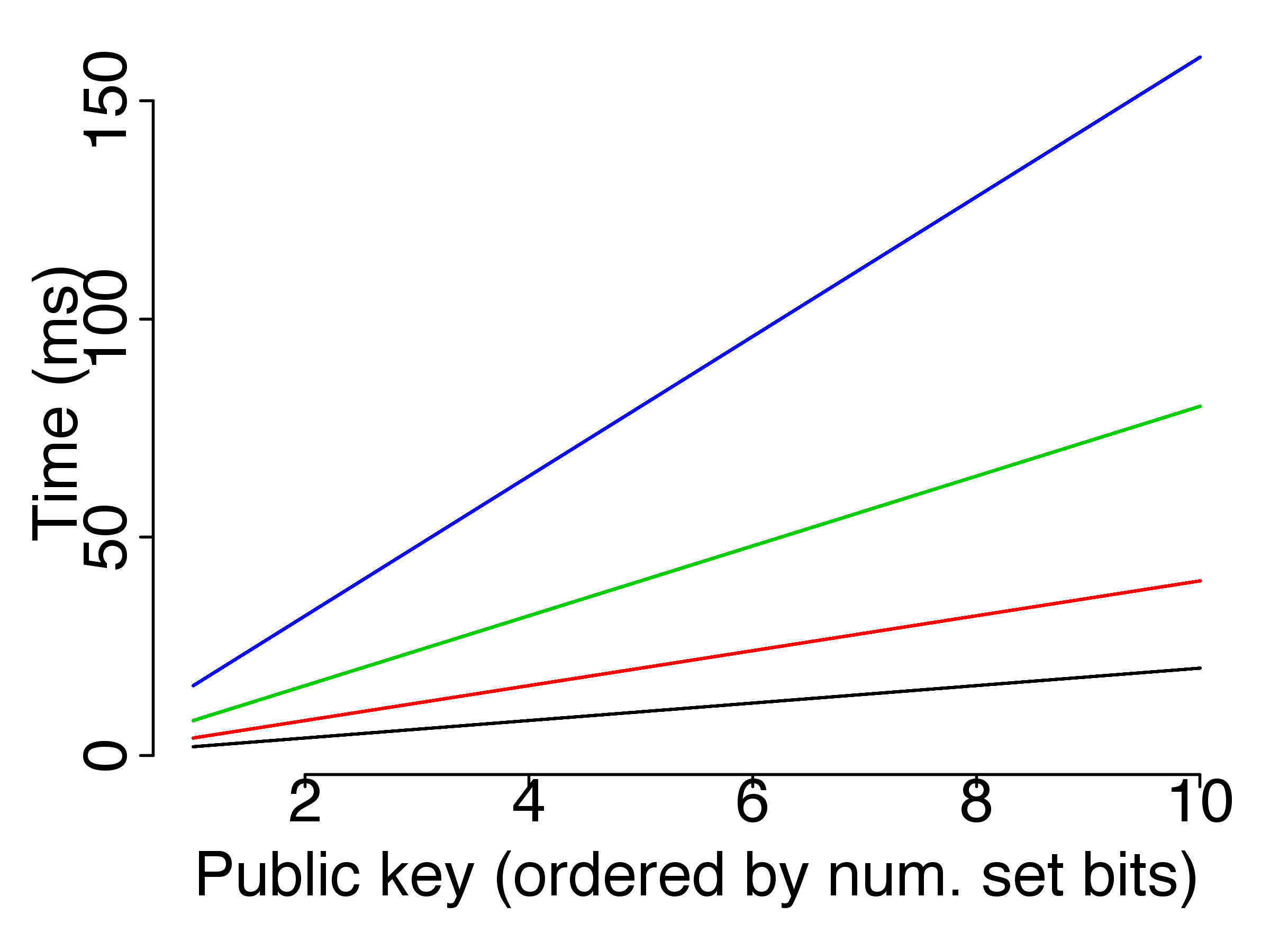}
	\includegraphics[width=0.32\textwidth]{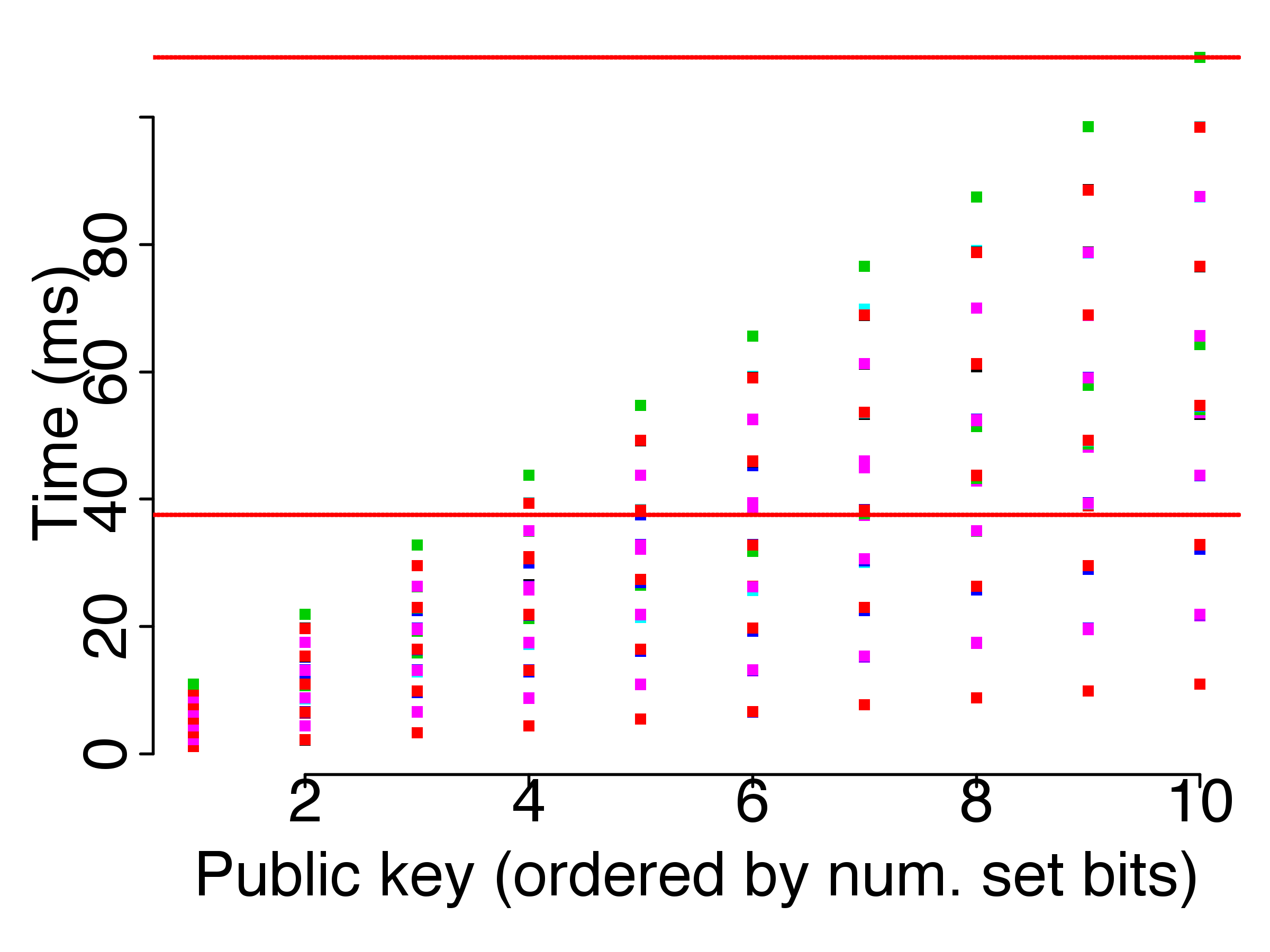}
	\includegraphics[width=0.32\textwidth]{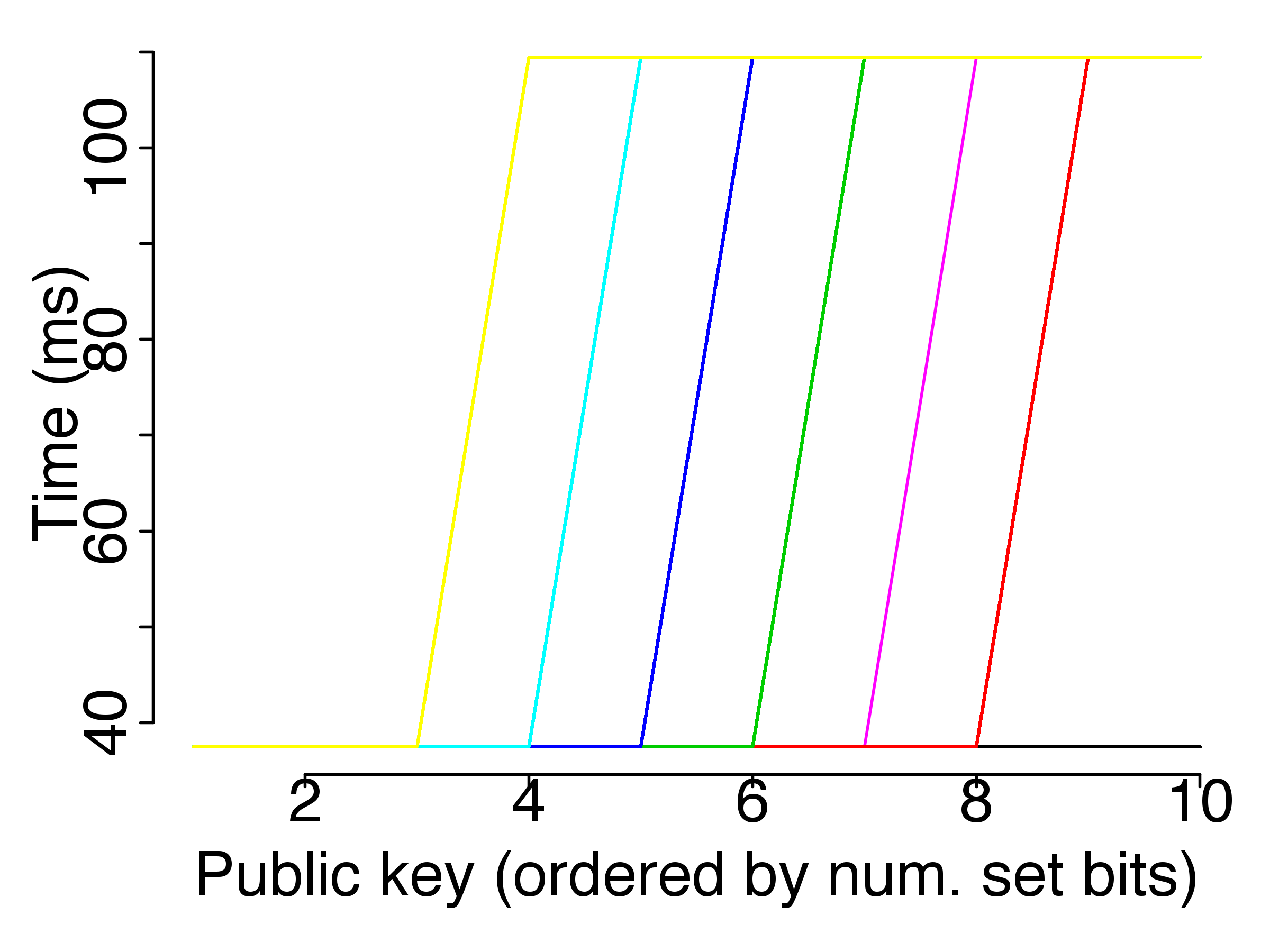}
	\caption{(a) The execution time after mitigation using the
		double scheme technique~\cite{askarov2010predictive}.
		There are four classes of functional observations
		after the mitigation.
		(b) Mitigation with bucketing~\cite{kopf2009provably}.
		All observations require to move to the closet red line.
		(c) Functional observations distinguish 7 classes of
		observations after mitigating with bucketing.}
	\label{tab:black-box}
	\label{fig:black-box}
	\label{fig:bucket}
\end{figure}

\section{Preliminaries}
\label{sec:def}
For a finite set $Q$, we use $|Q|$ for its cardinality.
A \emph{discrete probability distribution}, or just distribution, over a
set $Q$ is  a function $\dis : Q {\to} [0, 1]$ such that
$\sum_{q \in Q} \dis(q) = 1$.
Let $\DIST(Q)$ denote the set of all discrete distributions over $Q$.
We say a distribution ${\dis \in \DIST(Q)}$ is a \emph{point distribution}
if $\dis(q) {=} 1$ for a $q \in Q$.
Similarly, a distribution ${\dis \in \DIST(Q)}$ is \emph{uniform}
if $\dis(q) {=} 1/|Q|$ for all $q \in Q$.

\begin{definition}[Timing Model]
  The {\it timing model }of a program $\Pp$ is a tuple
  $\sPp = (X, Y, \Ss, \delta)$ where
  $X = \set{x_1, \ldots, x_n}$ is the set of {\it secret-input} variables,
  $Y = \set{y_1, \ldots, y_m}$ is the set of {\it public-input}
  variables,
  $\Ss \subseteq \Real^n$ is a finite set of {\it
  secret-inputs},  and $\delta: \Real^n \times \Real^m \to \Rplus$ is the
  execution-time function of the program over the secret and public inputs.
\end{definition}

We assume that the adversary knows the program and wishes to learn the value of
the secret input.
To do so, for some fixed secret value $s \in \Ss$, the adversary can invoke the
program to estimate (to an arbitrary precision) the execution time of the program.
If the set of public inputs is empty, i.e. $m = 0$, the adversary can only make
{\it scalar observations} of the execution time corresponding to a secret value.
In the more general setting, however, the adversary can arrange his
observations in a functional form by estimating an approximation of the {\it
  timing function $\delta(s) :  \Real^m \to \Rplus$} of the program.

A {\it functional observation} of the program $\Pp$ for a secret
input $s \in \Ss$ is the function $\delta(s): \Real^m \to \Rplus$ defined as
$\vy \in \Real^m \mapsto \delta(s, \vy)$.
Let $\Ff \subseteq [\Real^m \to \Rplus]$ be the finite set of all functional
observations of the program $\Pp$.
We define an order $\prec$ over the functional observations $\Ff$: for $f, g \in
\Ff$  we say that $f  \prec g$ if $f(y) \leq g(y)$ for all $y \in \Real^m$.

The set $\Ff$ characterizes an equivalence relation $\equiv_{\Ff}$,
namely secrets with equivalent functional observations, over the set $\Ss$,
defined as  following: $s \equiv_{\Ff} s'$ if there is an $f \in \Ff$ such that
$\delta(s) = \delta(s') = f$.
Let $\Ss_\Ff = \seq{S_1, S_2, \ldots, S_k}$ be the quotient space of $\Ss$
characterized by the observations $\Ff = \seq{f_1, f_2, \ldots, f_k}$.
We write $\Ss_{f}$ for the secret set $S \in \Ss_\Ff$ corresponding to the
observations $f \in \Ff$.
Let $\Bb = \seq{B_1, B_2, \ldots, B_k}$ be the size of observational
equivalence class in $\Ss_\Ff$, i.e. $B_i = |\Ss_{f_i}|$ for $f_i \in \Ff$ and let $B =
|\Ss| = \sum_{i=1}^k B_i$.

Shannon entropy, guessing entropy, and min-guess entropy are three prevalent
information metrics to quantify information leaks in programs.
K\"opf and Basin~\cite{KB07} characterize expressions for various
information-theoretic measures on information leaks when there is a uniform
distribution on $\Ss$ given below.
\begin{proposition}[K\"opf and Basin~\cite{KB07}]
\label{prop:overapprox}
Let $\Ff = \seq{f_1, \ldots, f_k}$ be a set of observations and let $\Ss$ be the set of secret values.
Let $\Bb = \seq{B_1, \ldots, B_k}$ be the corresponding size
of secret set in each class of observation and $B = \sum_{i=1}^k B_i$.
Assuming a uniform distribution on $\Ss$, entropies can be characterized as:
\begin{enumerate}
\item {\bf Shannon Entropy:}
  $\SE(\Ss|\Ff) \rmdef (\frac{1}{B})  \sum_{1 \leq i \leq k} B_i \log_2(B_i)$,
\item {\bf Guessing Entropy:}
  $\GE(\Ss|\Ff) \rmdef (\frac{1}{2B}) \sum_{1 \leq i \leq k} B_i^2 + \frac{1}{2}$, and
  \item {\bf Min-Guess Entropy:}
  $\MG(\Ss|\Ff) \rmdef \min_{1 \leq i \leq k} \set{(B_i + 1)/2}$.
\end{enumerate}
\end{proposition}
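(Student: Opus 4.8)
The plan is to prove each of the three formulas by reducing to a single elementary fact about uniform distributions and then invoking the definition of the corresponding \emph{conditional} information measure. The key observation is that under a uniform prior on $\Ss$, observing $f_i$ tells the adversary exactly that the secret lies in the class $\Ss_{f_i}$ of size $B_i$; the posterior distribution on $\Ss_{f_i}$ is again uniform (all secrets were equally likely a priori), and the probability of making observation $f_i$ is $P(f_i) = B_i / B$.

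First I would record the base case for a uniform distribution over a set of size $n$. Its Shannon entropy is $\log_2 n$, and its guessing entropy --- the expected number of queries of an optimal ``is the secret $s$?'' strategy --- is $\frac{1}{n}\sum_{j=1}^{n} j = (n+1)/2$, since when all secrets are equiprobable every guessing order is optimal and the expected position is the average of $1,\dots,n$.

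For Shannon entropy I would expand the conditional entropy as the expectation over observations, $\SE(\Ss|\Ff) = \sum_{i=1}^{k} P(f_i)\, \SE(\Ss|f_i)$. Substituting $P(f_i)=B_i/B$ and $\SE(\Ss|f_i)=\log_2 B_i$ from the base case yields the claimed $\frac{1}{B}\sum_i B_i \log_2 B_i$. The guessing-entropy case is identical in structure: $\GE(\Ss|\Ff)=\sum_i P(f_i)\cdot \frac{B_i+1}{2} = \frac{1}{2B}\sum_i B_i(B_i+1)$, and splitting the sum and using $\sum_i B_i = B$ isolates the additive constant $\frac{1}{2}$, giving $\frac{1}{2B}\sum_i B_i^2 + \frac{1}{2}$. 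For min-guess entropy the measure is defined as the \emph{minimum} guessing effort across observations rather than the average, so no weighting is needed: $\MG(\Ss|\Ff)=\min_i \GE(\Ss|f_i)=\min_i (B_i+1)/2$.

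The steps are all routine algebra once the base case is in place, so the only real subtlety is conceptual: correctly identifying the definitions of the three conditional measures --- in particular that guessing entropy averages the optimal-guess count over observations while min-guess entropy takes the minimum --- and verifying that the optimal guessing strategy on a uniform posterior yields exactly $(n+1)/2$. I expect the guessing-entropy base case to be the one point worth stating carefully, since it relies on the sum $\sum_{j=1}^{n} j = n(n+1)/2$ and on the fact that any enumeration order attains the optimum under uniformity.
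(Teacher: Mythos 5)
Your proof is correct, and since the paper states this proposition without proof (importing it directly from K\"opf and Basin~\cite{KB07}), there is nothing internal to compare against: your decomposition into the posterior-uniformity observation, the base cases $\log_2 n$ and $(n+1)/2$, and the weighting by $P(f_i)=B_i/B$ is exactly the standard derivation behind the cited result. The one point you rightly flag --- that min-guess entropy takes a minimum over observation classes rather than the $P(f_i)$-weighted average used for the other two measures --- is indeed the only place where the three formulas differ structurally, and you handle it correctly.
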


\section{Shannon Mitigation Problem}
Our goal is to mitigate the information leakage due to the timing
side channels by adding synthetic delays to the program.
An aggressive, but commonly-used, mitigation strategy aims to eliminate the side
channels by adding delays such that every secret value yields a common
functional observation.
However, this strategy may often be impractical as it may result in unacceptable
performance degradations of the response time.
Assuming a well-known penalty function associated with the performance degradation,
we study the problem of maximizing entropy while respecting a bound on
the performance degradation.
We dub the decision version of this problem Shannon mitigation.

Adding synthetic delays to
execution-time of the program, so as to mask the side-channel,
can give rise to new functional observations that correspond to
upper-envelopes of various combinations of original observations.
Let $\Ff = \seq{f_1, f_2, \ldots, f_k}$ be the set of functional observations.
For $I \subseteq {1, 2, \ldots, k}$, let $f_I = \vy \in
\Real^m \mapsto \sup_{i\in I}  f_i(\vy)$ be the functional observation corresponding to upper-envelope of the functional observations in the set $I$.
Let $\Gg(\Ff) = \set{ f_I \::\: I \not = \emptyset \subseteq \set{1, 2, \ldots, k}}$ be the set of all possible functional observations resulting from
the upper-envelope calculations.
To change the observation of a secret value with functional observation $f_i$ to
a new observation $f_I$ (we assume that $i \in I$), we need to add delay
function $f^i_I: \vy \in \Real^m \mapsto f_I(y) - f_i(y)$.

\paragraph{Mitigation Policies.}
Let $\Gg \subseteq \Gg(\Ff)$ be a set of admissible post-mitigation
observations.
A {\it mitigation policy} is a function $\mu: \Ff \to \DIST(\Gg)$ that for each
secret $s \in \Ss_{f}$ suggests the probability distribution $\mu(f)$ over the
functional observations.
We say that a mitigation policy is {\it deterministic} if for all $f \in
\Ff$ we have that $\mu(f)$ is a point distribution.
Abusing notations, we represent a deterministic mitigation policy as a function
$\mu: \Ff \to \Gg$.
The semantics of a mitigation policy recommends to a program analyst
a probability $\mu(f)(g)$ to elevate a secret input $s \in \Ss_f$ from the
observational class $f$ to the class $g \in \Gg$ by adding
$\max \set{0, g(p) - f(p)}$ units delay to the corresponding execution-time
$\delta(s, p)$ for all $p \in Y$.
We assume that the mitigation policies respect the order, i.e. for  every mitigation
policy $\mu$ and for all $f \in \Ff$ and $g \in \Gg$, we have that $\mu(f)(g) >
0$ implies that $f \prec g$.
Let $M_{(\Ff\to\Gg)}$ be the set of mitigation policies from the set of
observational clusters $\Ff$ into the clusters $\Gg$.

For the functional observations $\Ff = \seq{f_1, \ldots, f_k}$ and a mitigation
policy $\mu \in M_{(\Ff\to\Gg)}$, the resulting observation set
$\Ff[\mu] \subseteq \Gg$ is defined as:
\[
\Ff[\mu] = \set{ g \in \Gg \::\: \text{ there exists } f \in \Ff \text{ such
    that } \mu(f)(g) > 0}.
\]
Since the mitigation policy is stochastic, we use average sizes of resulting
observations to represent fitness of a mitigation policy.
For $\Ff[\mu] = \seq{g_1, g_2, \ldots, g_\ell}$, we define their expected class
sizes $\Bb_\mu = \seq{C_1, C_2, \ldots, C_\ell}$ as
$C_i = \sum_{j=1}^{i} \mu(f_j)(f_i)\cdot B_j$ (observe that $\sum_{i=1}^{\ell} C_i = B$).
Assuming a uniform distribution on $\Ss$, various entropies for the expected
class size after applying a policy $\mu \in M_{(\Ff\to\Gg)}$ can be characterized
by the following expressions:
\begin{enumerate}
\item {\bf Shannon Entropy:}
  $\SE(\Ss|\Ff, \mu) \rmdef (\frac{1}{B})  \sum_{1 \leq i \leq \ell} C_i \log_2(C_i)$,
\item {\bf Guessing Entropy:}
  $\GE(\Ss|\Ff, \mu) \rmdef (\frac{1}{2B}) \sum_{1 \leq i \leq \ell} C_i^2 + \frac{1}{2}$, and
  \item {\bf Min-Guess Entropy:}
  $\MG(\Ss|\Ff, \mu) \rmdef \min_{1 \leq i \leq \ell} \set{(C_i + 1)/2}$.
\end{enumerate}
We note that the above definitions do not represent the expected entropies,
but rather entropies corresponding to the expected cluster sizes.
However, the three quantities provide bounds on the expected entropies after applying $\mu$.
Since Shannon and Min-Guess entropies are concave functions, from Jensen's
inequality, we get that  $\SE(\Ss|\Ff, \mu)$ and  $\MG(\Ss|\Ff, \mu)$ are upper
bounds on expected Shannon and Min-Guess entropies.
Similarly, $\GE(\Ss|\Ff, \mu)$, being a convex function, give a lower bound on
expected guessing entropy.

We are interested in maximizing the entropy while respecting constraints on the
overall performance of the system.
We formalize the notion of performance by introducing performance penalties:
there is a function $\pi: \Ff \times \Gg \to \Rplus$ such that elevating
from the observation $f \in \Ff$ to the functional observation $g \in
\Gg$ adds an extra $\pi(f, g)$ performance overheads to the program.
The expected performance penalty associated with a policy $\mu$, $\pi(\mu)$,
is defined as the probabilistically weighted sum of the penalties, i.e.
$\sum_{f \in \Ff, g \in \Gg:  f \prec g} |\Ss_{f}| \cdot
\mu(f)(g) \cdot \pi(f, g)$.
Now, we introduce our key decision problem.
\begin{definition}[Shannon Mitigation]
    \label{def-shannon-mitig}
    Given a set of functional observations $\Ff = \seq{f_1, \ldots, f_k}$, a set
    of admissible post-mitigation observations $\Gg \subseteq \Gg(\Ff)$,
    set of secrets $\Ss$, a penalty function $\pi: \Ff \times \Gg \to \Rplus$, a
    performance penalty upper bound $\Delta \in \Rplus$, and an entropy
    lower-bound $E \in \Rplus$, the Shannon mitigation problem
    $\textsc{Shan}_\Ee(\Ff, \Gg, \Ss, \pi, E, \Delta)$, for a given entropy measure
    $\Ee \in \set{\SE,\GE,\MG}$, is to decide whether there exists a
    mitigation policy $\mu \in M_{(\Ff\to\Gg)}$ such that $\Ee(\Ss| \Ff, \mu)
    \geq E$ and $\pi(\mu) \leq \Delta$.
    We define the {\it deterministic Shannon mitigation} variant where the goal
    is to find a deterministic such policy.
\end{definition}

\section{Algorithms for Shannon Mitigation Problem}
\label{sec:mit}
\subsection{Deterministic Shannon Mitigation}
We first establish the intractability of the deterministic variant.
\begin{theorem}
  Deterministic Shannon mitigation problem is NP-complete.
\end{theorem}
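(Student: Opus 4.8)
The plan is to prove both directions of NP-completeness separately: membership in NP and NP-hardness, the latter by reduction from \textsc{Partition} using the min-guess objective $\MG$.

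\textbf{Membership.} A deterministic policy $\mu:\Ff \to \Gg$ is a witness of size polynomial in the input: it names one target per observation, and since $\Gg$ is supplied explicitly, $|\Gg|$ is bounded by the input size. Given $\mu$, I would verify in polynomial time that it respects the order ($\mu(f)=g \Rightarrow f \prec g$), compute the resulting clusters and their sizes $C_i$ by summing the $B_j$ of all observations routed to each target, and then evaluate the penalty $\pi(\mu)$ and the chosen entropy $\Ee(\Ss\mid\Ff,\mu)$ against $\Delta$ and $E$. For $\GE$ and $\MG$ these comparisons are over rationals; for $\SE$ one evaluates the logarithms to enough bits (the only mild subtlety). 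Hence verification is polynomial and the problem lies in NP.

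\textbf{Hardness.} I would reduce from \textsc{Partition}: given positive integers $a_1,\dots,a_n$ with $\sum_i a_i = 2W$, decide whether some subset sums to $W$. Instantiate the timing model as follows. Take $n$ ``item'' observations $f_1,\dots,f_n$ realized as distinct constant timing functions $f_i \equiv \epsilon_i$ with tiny, pairwise-distinct values, and give $f_i$ the class size $B_i = a_i$. Add two ``anchor'' observations $h_L,h_R$, each of class size $1$, realized over a one-dimensional public input as step functions that both dominate every $f_i$ pointwise yet are mutually incomparable, so $f_i \prec h_L$ and $f_i \prec h_R$ while $h_L \not\prec h_R$ and $h_R \not\prec h_L$. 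Set $\Gg = \set{h_L,h_R}$, take the penalty function to be identically $0$ with $\Delta = 0$ (the definition permits $\pi$ to be supplied directly), choose $\Ee = \MG$, and set the threshold $E = (W+2)/2$.

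\textbf{Correctness and obstacle.} Since $\Gg = \set{h_L,h_R}$, every deterministic policy leaves each anchor fixed ($h_L \to h_L$, $h_R \to h_R$) and routes each $f_i$ to $h_L$ or $h_R$; the two clusters then have sizes $1 + \sum_{i \to h_L} a_i$ and $1 + \sum_{i \to h_R} a_i$, summing to $2W+2$. The requirement $\MG(\Ss\mid\Ff,\mu) \geq (W+2)/2$ is precisely ``both cluster sizes $\geq W+1$'', which given the fixed total forces each to equal $W+1$, i.e.\ the items routed left sum to exactly $W$. Thus a feasible policy exists iff the \textsc{Partition} instance is a yes-instance, and the reduction is clearly polynomial; analogous constructions give hardness for $\SE$ and $\GE$. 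I expect the main obstacle to be modeling rather than combinatorics: one must exhibit genuine functional observations realizing the required order structure (two incomparable upper-envelopes each dominating all items) and argue that every element of $\Gg$ used as a target actually materializes as a cluster, so that $\MG$ ranges over exactly the two intended clusters. A secondary subtlety is whether one insists that $\pi$ arise from an actual metric such as $L_1$ on the timing functions; if so, the zero-penalty shortcut must be replaced by penalties that are simultaneously realizable and non-binding, which needs a little more care but leaves the argument intact.
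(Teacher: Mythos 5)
Your proof is correct, and at the top level it follows the same skeleton as the paper: membership by guess-and-check of a polynomial-size policy, and hardness by reduction from the two-way partitioning (\textsc{Partition}) problem using the min-guess objective. The gadget, however, is genuinely different. The paper keeps $\Gg = \Ff$ with the $k$ item observations themselves as targets, and forces ``exactly two resulting clusters'' by a pincer: the entropy threshold rules out three or more clusters, while a unit penalty $\pi_A(f,g)=1$ with budget $\Delta_A = k-2$ rules out collapsing everything into one. You instead restrict $\Gg$ to two mutually incomparable anchor observations dominating all items, so the two-cluster structure is forced purely by the order-respecting constraint on policies, with $\pi \equiv 0$ and $\Delta = 0$. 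What each buys: the paper's version establishes hardness already in the restricted setting $\Gg = \Ff$ that its dynamic-programming algorithm later targets, which is the stronger statement for motivating the approximation; your version exploits the freedom in Definition~\ref{def-shannon-mitig} to choose $\Gg$, and in exchange is tighter in its accounting --- it avoids two loose ends in the paper's sketch, namely that the expected penalty $\pi(\mu)$ is weighted by the class sizes $|\Ss_f|$ (so $\pi_A = 1$, $\Delta_A = k-2$ does not literally count merges unless sizes are normalized), and the constant-factor mismatch between the threshold $E_A = (1/2)\sum_i a_i$ and $\MG = \min_i (C_i+1)/2$ (two balanced clusters of size $W$ give $(W+1)/2$, not $W$). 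Your anchor bookkeeping with $E = (W+2)/2$ makes the equivalence exact, and your realizability remarks (step functions giving incomparable dominating envelopes; singleton envelopes placing the anchors inside $\Gg(\Ff)$) correctly discharge the modeling obligations.
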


\begin{proof}
  It is easy to see that the deterministic Shannon mitigation problem is in NP:
  one can guess a certificate as a deterministic  mitigation policy $\mu \in M_{(\Ff\to\Gg)}$ and can
  verify in polynomial time that it satisfies the entropy and overhead
  constraints.
  Next, we sketch the hardness proof for the min-guess entropy measure by
  providing a reduction from the {\it two-way partitioning}
  problem~\cite{korf98acomplete}.
  For the Shannon entropy and guess entropy measures, a reduction can be
  established from the Shannon capacity problem~\cite{Fallg10} and the Euclidean
  sum-of-squares clustering problem~\cite{Aloi09}, respectively.

  Given a set $A = \set{a_1, a_2, \ldots, a_k}$ of integer values, the two-way
  partitioning problem is to decide whether there is a partition $A_1 \uplus A_2
  = A$ into two sets $A_1$ and $A_2$  with equal sums, i.e. $\sum_{a \in A_1} a =
  \sum_{a \in A_2} a$.
  W.l.o.g assume that $a_i \leq a_j$ for $i \leq j$.
  We reduce this problem to a deterministic Shannon mitigation problem
  $\textsc{Shan}_\MG(\Ff_A, \Gg_A, \Ss_A, \pi_A, E_A, \Delta_A)$  with $k$
  clusters $\Ff_A = \Gg_A = \seq{f_1, f_2, \ldots, f_k}$ with the secret set $\Ss_A =
  \seq{S_1, S_2, \ldots, S_k}$ such that $|S_i| = a_i$.
  If $\sum_{1 \leq i \leq k} a_i$ is odd then the solution to the two-way partitioning
  instance is trivially \texttt{no}.
  Otherwise, let $E_A = (1/2) \sum_{1 \leq i \leq k} a_i$.
  Notice that any deterministic mitigation strategy that achieves  min-guess entropy
  larger than or equal to $E_A$ must have at most two clusters.
  On the other hand, the best min-guess entropy value can be achieved by having
  just a single cluster.
  To avoid this and force getting two clusters corresponding to the two partitions
  of a solution to the two-way partitions problem instance $A$, we introduce performance penalties such that merging more than
  $k-2$ clusters is disallowed by keeping performance penalty $\pi_A(f, g) = 1$
  and performance overhead $\Delta_A = k-2$.
  It is straightforward to verify that an instance of the resulting min-guess entropy
  problem has a \texttt{yes} answer if and only if the two-way partitioning
  instance does.
\end{proof}

Since the deterministic Shannon mitigation problem is intractable,
we design an approximate solution for the problem.
Note that the problem is hard even if we only use existing functional
observations for mitigation, i.e., $\Gg = \Ff$. Therefore, we consider
this case for the approximate solution. Furthermore,
we assume the following {\it sequential dominance}
restriction on a deterministic policy $\mu$:
for $f, g \in \Ff$ if $f \prec g$ then either $\mu(f) \prec g$ or $\mu(f) = \mu(g)$.
In other words, for any given $f \prec g$,
$f$ can not be moved to a higher cluster than $g$ without having $g$
be moved to that cluster.
For example, Fig~\ref{Shannon-mitigation-examples}(a) shows
Shannon mitigation problem with four functional observations and all
possible mitigation policies (we represent $\mu(f_i)(f_j)$ with $\mu(i,j)$).
Fig~\ref{Shannon-mitigation-examples}(b)
satisfies the sequential dominance restriction,
while Fig~\ref{Shannon-mitigation-examples}(c) does not.

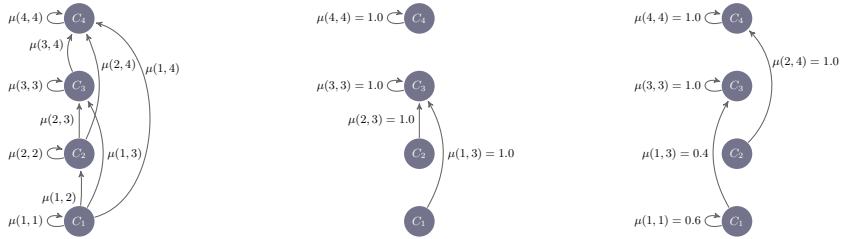
\begin{figure}[t!]
	\begin{minipage}{0.3\textwidth}
		\centering
		\scalebox{0.5}{
			\begin{tikzpicture}[->,>=stealth',shorten >=1pt,auto,node distance=1.8cm,thick,
			draw = black!60, fill=black!60]
			\tikzstyle{every state}=[fill=blue!10!gray,draw=none,text=white]

			\node[state] 		 (A)                              {$C_4$};
			\node[state]         (B) [below of=A]         {$C_3$};
			\node[state]         (C) [below of=B]         {$C_2$};
			\node[state]         (D) [below of=C]         {$C_1$};

			\path (D) edge  [loop left]  node {\small $\mu(1,1)$} (D)
			edge  [bend right=5]    node[pos=0.2, left] {\small $\mu(1,2)$} (C)
			edge  [bend right]    node[midway,right] {\small $\mu(1,3)$} (B)
			edge  [bend right=75]    node[pos=0.7, right] {\small $\mu(1,4)$} (A)
			(C) edge  [loop left]   node {\small $\mu(2,2)$} (C)
			edge         				  node[pos=0.5, left] {\small $\mu(2,3)$} (B)
			edge  [bend right=25]       node[pos=0.7, right] {\small $\mu(2,4)$} (A)
			(B) edge  [loop left]   node {\small $\mu(3,3)$} (B)
			edge  [bend left=25]  node[pos=0.7, left] {\small $\mu(3,4)$} (A)
			(A) edge [loop left]   node {\small $\mu(4,4)$} (A);
			\end{tikzpicture}
		}
	\end{minipage}
	\hfill
	\begin{minipage}{0.3\textwidth}
		\centering
		\scalebox{0.5}{
			\begin{tikzpicture}[->,>=stealth',shorten >=1pt,auto,node distance=1.8cm,thick,
			draw = black!60, fill=black!60]
			\tikzstyle{every state}=[fill=blue!10!gray,draw=none,text=white]

			\node[state] 		 (A)                              {$C_4$};
			\node[state]         (B) [below of=A]         {$C_3$};
			\node[state]         (C) [below of=B]         {$C_2$};
			\node[state]         (D) [below of=C]         {$C_1$};

			\path (D) edge  [bend right]  node[midway,right] {\small $\mu(1,3) = 1.0$} (B)
			(C) edge   node[pos=0.5, left] {\small $\mu(2,3) = 1.0$} (B)
			(B) edge  [loop left]   node {\small $\mu(3,3) = 1.0$} (B)
			(A) edge [loop left]   node {\small $\mu(4,4) = 1.0$} (A);
			\end{tikzpicture}
		}
	\end{minipage}
	\hfill
	\begin{minipage}{0.3\textwidth}
		\centering
		\scalebox{0.5}{
			\begin{tikzpicture}[->,>=stealth',shorten >=1pt,auto,node distance=1.8cm,thick,
			draw = black!60, fill=black!60]
			\tikzstyle{every state}=[fill=blue!10!gray,draw=none,text=white]

			\node[state] 		 (A)                              {$C_4$};
			\node[state]         (B) [below of=A]         {$C_3$};
			\node[state]         (C) [below of=B]         {$C_2$};
			\node[state]         (D) [below of=C]         {$C_1$};

			\path (D) edge  [loop left]  node {\small $\mu(1,1) = 0.6$} (D)
			edge  [bend left]    node[midway,left] {\small $\mu(1,3) = 0.4$} (B)
			(C) edge  [bend right=45]  node[pos=0.7, right] {\small $\mu(2,4) = 1.0$} (A)
			(B) edge  [loop left]   node {\small $\mu(3,3) = 1.0$} (B)
			(A) edge [loop left]   node {\small $\mu(4,4) = 1.0$} (A);
			\end{tikzpicture}
		}
	\end{minipage}
	\caption{(a). Example of Shannon mitigation problem with all possible mitigation policies
		for 4 classes of observations. (b,c) Two examples of the mitigation policies that results in 2
		and 3 classes of observations.}
	\label{Shannon-mitigation-examples}
\end{figure}

The search for the deterministic policies satisfying the sequential dominance
restriction can be performed efficiently using dynamic
programming by effective use of intermediate results' memorizations.

Algorithm~(\ref{alg:dyn-merge}) provides a pseudocode for the dynamic
programming solution to find a deterministic mitigation policy satisfying
the sequential dominance.
The key idea is to start with considering policies that produce a single
cluster for subclasses $P_i$ of the problem with the observation from
$\seq{f_1, \ldots, f_i}$, and then compute policies producing one additional cluster
in each step by utilizing the previously computed sub-problems and keeping track
of the performance penalties.
The algorithm terminates as soon as the solution of the current step respects
the performance bound.
The complexity of the algorithm is $O(k^3)$.
\begin{algorithm*}[t!]\normalsize
  {
    \DontPrintSemicolon
    \KwIn{The Shannon entropy problem $\textsc{Shan}_{MGE}(\Ff, \Gg = \Ff, \Ss_\Ff,
      \pi, E, \Delta)$}
    \KwOut{The entropy table ($T$).}
    \For{$i=1~to~k$}
	{
	  $T(i,1) = \Ee({\bigcup\limits_{j=1}^{i} S_{j}})$\;
	  \lIf{$\sum\limits_{1 \leq j \leq i}  \pi(j,i)(B_j/B) \leq \Delta$}{$\Pi(i,1) = \sum\limits_{1 \leq j \leq i}  \pi(j,i)(B_j/B)$}\lElse{$\Pi(i,1) = \infty$}
	}

	\lIf{$\Pi(k,1) < \infty$}
	    {
	      \Return $T$;
	    }

	    \For{${r}=2~to~k$}
		{
		  \For{$i=1~to~k$}
		      {
				$\Omega(i,{{r}})=\{j \::\: 1 \leq j < i \text{ and }
                      \Pi(j,{r}-1) +\sum\limits_{j < q \leq i} \pi(q,i)(B_q/B)  \leq \Delta\}$

                 \lIf{$\Omega{\neq}\emptyset$}
                     {
                       $T(i,{r}){=} \max\limits_{j \in \Omega(i,{r})}\Big(\min\big(T(j,{r}{-}1),
                       \Ee({\bigcup\limits_{q=j+1}^{i} S_{q}})\big)\Big)$}
                     \lElse{$T(i,{r}){=}-\infty$}

			       Let $j$ be the index that maximizes $T(i,{r})$

			       \lIf{$\Omega \neq
                                 \emptyset$}{$\Pi(i,{r}) = \big(\Pi(j,{r}-1) + \sum\limits_{j < q \leq i} \pi(q,i) (B_{q}/B) \big)$}
                                   \lElse{$\Pi(i,{r}) = \infty$}
		      }
		      \lIf{$\Pi(k,{r}) < \infty$}
			 {
			   \Return $T$;
			 }
	        }
        	\Return $T$;

	\caption{\textsc{Approximate Deterministic Shannon Mitigation}}
	\label{alg:dyn-merge}
}
\end{algorithm*}

\subsection{Stochastic Shannon Mitigation Algorithm}
Next, we solve the (stochastic) Shannon mitigation problem by posing it as an
optimization problem.
Consider the stochastic Shannon mitigation problem $\textsc{Shan}_\Ee$ $(\Ff, \Gg = \Ff,
\Ss_\Ff, \pi, E, \Delta)$ with a stochastic policy $\mu: \Ff \to \DIST(\Gg)$
and $\Ss_\Ff = \seq{S_1, S_2, \ldots, S_k}$.
The following program characterizes the optimization problem that solves the
Shannon mitigation problem with stochastic policy.

\begin{mdframed}
Maximize $\Ee$, subject to:
\begin{enumerate}
\item
  $0 \leq \mu(f_i)(f_j) \leq 1$ for $1 \leq i \leq j \leq k$
\item
  $\sum_{i \leq j \leq k} \mu(f_i)(f_j) = 1$ for all $1 \leq i \leq k$.
\item
  $\sum_{i = 1}^k \sum_{j =i}^k |S_i| \cdot \mu(f_i)(f_j)  \cdot \pi(f_i,f_j) \leq \Delta$.
\item
  $C_{j} = \sum_{i = 1}^j |S_i| \cdot \mu(f_i)(f_j)$ for $1 \leq j \leq k$.
\end{enumerate}
Here, the objective function $\Ee$ is one
of the following functions:
\begin{enumerate}
\item {\bf Guessing Entropy}
  $ \Ee_{GE} = \sum\limits_{j = 1}^k C_{j}^2$
\item {\bf Min-Guess Entropy}
  $\Ee_{MGE} = \min\limits_{1 \leq j \leq k} \{C_{j}~|~C_j > 0\}$
\item {\bf Shannon Entropy}
  $\Ee_{SE} = \sum\limits_{j = 1}^k C_{j} \cdot \log_{2}(C_{j})$
\end{enumerate}
\end{mdframed}

The linear constraints for the problem are defined as the following.
The condition (1) and (2) express that $\mu$ provides a  probability distributions,
condition (3) provides restrictions regarding the performance constraint, and
the condition (4) is the entropy specific constraint.
The objective function of the optimization problem is defined based on the
entropy criteria from $\Ee$. For the simplicity, we omit the constant terms from
the objective function definitions. For the guessing entropy, the problem is an
instance of linearly constrained quadratic optimization problem~\cite{nocedal2006numerical}.
The problem with Shannon entropy is a non-linear optimization problem~\cite{bertsekasnonlinear}.
Finally, the optimization problem with min-guess entropy is an instance of mixed integer
programming~\cite{nemhauser1988integer}. We evaluate the scalability of these
solvers empirically in Section~\ref{sec:micro} and leave the exact complexity as
an open problem. We show that the min-guess entropy objective function
can be efficiently solved with the branch and bound algorithms~\cite{papadimitriou1998combinatorial}.
Fig~\ref{Shannon-mitigation-examples}(b,c) show two instantiations of the
mitigation policies that are possible for the stochastic mitigation.

\section{Implementation Details}
\noindent \textbf{A. Environmental Setups.}
All timing measurements  are conducted on an Intel NUC5i5RYH.
We switch off JIT Compilation and run each experiment
multiple times and use the mean running time.
This helps to reduce the effects
of environmental factors such as the Garbage Collections.
All other analyses are conducted on an Intel i5-2.7 GHz machine.

\noindent \textbf{B. Implementation of Side Channel Discovery.}
We use the technique presented in~\cite{FuncSideChan18} for
the side channel discovery. The technique applies the functional data analysis ~\cite{ramsay2009functional} to create B-spline basis and
fit functions to the vector of timing observations for each secret value. Then,
the technique applies the functional data clustering~\cite{jacques2014functional} to
obtain $K$ classes of observations.
We use the number of secret values in a cluster as the class size metric
and the $L_1$ distance norm between the clusters as the penalty function.

\noindent \textbf{C. Implementation of Mitigation Policy Algorithms.}
\label{sec:impl-policy}
For the stochastic optimization, we
encode the Shannon entropy and guessing entropy with
linear constraints in Scipy~\cite{scipy}. Since the objective
functions are non-linear (for the Shannon entropy) and quadratic
(for the guessing entropy),
Scipy uses sequential least square programming
(SLSQP)~\cite{nocedal2006sequential} to maximize the
objectives. For the stochastic optimization with the min-guess
entropy, we encode the problem in Gurobi~\cite{gurobi}
as a mixed-integer programming (MIP) problem~\cite{nemhauser1988integer}.
Gurobi solves the problem efficiently with branch-and-bound
algorithms~\cite{MIP-branch-bound}. We use Java to implement
the dynamic programming.

\noindent \textbf{D. Implementation of Enforcement.}
The enforcement of mitigation policy is implemented
in two steps. \textit{First}, we use the initial timing functions
and characterize them with program internal properties
such as basic block calls. To do so, we use the decision
tree learning approach presented in~\cite{FuncSideChan18}.
The decision tree model characterizes each functional observations
with properties of program internals.
\textit{Second}, given the policy of mitigation,
we enforce the mitigation policy with a monitoring system implemented
on top of the Javassist~\cite{chiba1998javassist} library.
The monitoring system uses the decision tree model and matches the
properties enabled during an execution with the tree model (detection
of the current cluster). Then, it adds extra delays, based
on the mitigation policy, to the current execution-time and enforces
the mitigation policy. Note that the dynamic monitoring can result
in a few micro-second delays.
For the programs with timing differences in the order of micro-seconds,
we transform source code using the decision tree model.
The transformation requires manual efforts to modify and
compile the new program. But, it adds negligible delays.

\noindent \textbf{E. Micro-benchmark Results.}
\label{sec:micro}
Our goal is to compare different mitigation methods in terms of
their security and performance. We examine the
computation time of our tool \toolname in calculating the mitigation policies.
See appendix for the relationships between performance bounds
and entropy measures.

\noindent\textit{Applications}:
Mod\_Exp applications~\cite{mantel2015transforming} are instances
of square-and-multiply modular exponentiation ($R = y^k~mod~n$) used
for secret key operations in RSA~\cite{rivest1978method}.
Branch\_and\_Loop series consist of 6 applications where each application
has conditions over secret values and runs a linear loop over the public values.
The running time of the applications depend on the slope of the linear loops
determined by the secret input.

\noindent\textit{Computation time comparisons}:
Fig~\ref{fig:comp-time} shows the computation time
for Branch\_and \_Loop applications (the applications are
ordered in x-axis based on the discovered number of observational classes).
For the min-guess entropy, we observe that both stochastic and
dynamic programming approaches are efficient and fast as shown
in Fig~\ref{fig:comp-time}(a).
For the Shannon and guessing entropies,
the dynamic programming is scalable, while the stochastic mitigation
is computationally expensive beyond 60 classes of observations as
shown in Fig~\ref{fig:comp-time}(b,c).

\noindent\textit{Mitigation Algorithm Comparisons}:
Tab~\ref{tab:benchmark} shows micro-benchmark results
that compare the four mitigation algorithms with the two
program series. Double scheme mitigation
technique~\cite{askarov2010predictive}
does not provide guarantees on
the performance overhead, and we can see that it is increased
by more than 75 times for mod\_exp\_6. Double scheme
method reduces the number of classes of observations.
However, we observe that this mitigation has difficulty improving
the min-guess entropy.
Second, Bucketing algorithm~\cite{kopf2009provably} can
guarantee the performance overhead, but it is not an effective method
to improve the security of functional observations,
see the examples mod\_exp\_6 and Branch\_and\_Loop\_6.
Third, in the algorithms, \toolname guarantees the performance
to be below a certain bound, while it results in the highest entropy
values. In most cases, the stochastic optimization
technique achieves the highest min-entropy value. Here, we show
the results with min-guess entropy measure. Also,
we have strong evidences to show that \toolname achieves higher
Shannon and guessing entropies. For example, in
B\_L\_5, the initial Shannon entropy has improved from
$2.72$ to $6.62$, $4.1$, $7.56$, and $7.28$ for the double
scheme, the bucketing, the stochastic, and the deterministic
algorithms, respectively.

\begin{table*}[t!]
	\caption{Micro-benchmark results. M\_E and B\_L stand for Mod\_Exp
		and Branch\_and\_Loop applications. Legend:
		\#\textbf{S}: no. of secret values, \#\textbf{P}: no. of public values,
		$\Delta$: Upper bound over performance penalty,
		$\epsilon$: clustering parameter,
		\#\textbf{K}: classes of observations before mitigation,
		\#\textbf{K}$_X$: classes of observations after mitigation with X technique,
		\textbf{mGE}: Min-guess entropy before mitigation,
		\textbf{mGE}$_X$: Min-guess entropy after mitigation with X,
		\textbf{O}$_X$: Performance overhead added after mitigation with X.
	}
	\label{tab:benchmark}
		\resizebox{\textwidth}{!}{
		\begin{tabular}{ || l || r | r | r | r | r | r || r | r | r || r | r | r || r | r | r || r | r | r ||}
			\hline
			      &   \multicolumn{6}{c||}{Initial Characteristics} & \multicolumn{3}{c||}{Double Scheme}& \multicolumn{3}{c||}{Bucketing} & \multicolumn{3}{c||}{\toolname (Determ.)}& \multicolumn{3}{c||}{\toolname (Stoch.)} \\
			\cline{2-19}
			App(s) & \#\textbf{S} & \#\textbf{P} & $\Delta$ & $\epsilon~~$ & \#\textbf{K} & \textbf{mGE} & \#\textbf{K}$_{DS}$ & \textbf{mGE}$_{DS}$ & \textbf{O}$_{DS}$(\%) & \#\textbf{K}$_B$ & \textbf{mGE}$_{B}$ & \textbf{O}$_{B}$(\%) &  \textbf{K}$_{D}$ & \#\textbf{mGE}$_{D}$ & \textbf{O}$_{D}$(\%) & \#\textbf{K}$_{S}$ & \textbf{mGE}$_{S}$ & \textbf{O}$_{S}$(\%) \\ \hline
			M\_E\_1 & 32 & 32 & 0.5 & 1.0 & 1 & 16.5 & 1 & 16.5 & 0.0  & 1 & 16.5 & 0.0 & 1 & 16.5 & 0.0 &  1 & 16.5 & 0.0  \\ \hline
			M\_E\_2 & 64 & 64 & 0.5 & 1.0 & 2 & 16.5  & 1 & 32.5 & 5,221 & 1 & 32.5 & 27.6 & 1 & 32.5 & 21.4 &  1 & 32.5 & 21.4  \\ \hline
			M\_E\_3 & 128 & 128 & 0.5 & 2.0 & 2 & 32.5  & 1 & 64.5 & 5,407 & 1 & 64.5 & 33.9 & 1 & 64.5 & 22.7 &  1 & 64.5 & 22.7  \\ \hline
			M\_E\_4 & 256 & 256 & 0.5 & 2.0 & 4 & 10.5  & 1 & 128.5 & 6,679  & 1 &  128.5 & 30.7 & 1 & 128.5 & 28.3 & 1 & 128.5 & 28.3 \\ \hline
			M\_E\_5 & 512 & 512 & 0.5 & 5.0 & 23 & 1.0 & 1 & 256.5 & 7,294 & 2 & 128.5 & 50.0 & 1 & 256.5 & 31.0 & 1 & 253.0 & 30.3  \\ \hline
			M\_E\_6 & 1,024 & 1,024 & 0.5 & 8.0 & 40 & 1.0 & 1 & 512.5 & 7,822  & 20 & 1.0 & 34.5 & 2 & 27.5 & 46.7 & 5 & 85.5 & 50.0  \\ \hline
			B\_L\_1 & 25 & 50 & 0.5 & 10.0 & 4  & 3.0 & 3 & 3.0 & 73.0 & 3 & 3.0 & 17.5 & 2 & 5.5 & 26.1 &  2 & 6.5 & 34.9 \\ \hline
			B\_L\_2 & 50 & 50 & 0.5 & 10.0 & 8 & 3.0  & 4 & 3.0  & 61.3 & 5 & 3.0  &  21.9 &  2 & 10.5 & 45.3 & 2 & 13.0 & 45.3 \\ \hline
			B\_L\_3 & 100 & 50 & 0.5 & 20.0 & 16 & 3.0 & 4 & 8.0  & 42.4 & 8 & 3.0 &  33.4 &  2 & 20.5 & 48.3 & 2 & 21.5 & 50 \\ \hline
			B\_L\_4 & 200 & 50 & 0.5 & 20.0 & 32 & 3.0 & 6 & 3.0 & 36.9  & 16 & 3.0 &  28.7 &  2 & 48.0 & 48.7 & 2 & 50.5 & 49.7 \\ \hline
			B\_L\_5 & 400 & 50 & 0.5 & 20.0 & 64 & 3.0 & 8 & 3.0 & 35.4 & 32 & 3.0 & 27.2 &  3 & 65.5 & 32.0 & 2 & 100.5 & 50.0  \\ \hline
			B\_L\_6 & 800 & 50 & 0.5 & 20.0 & 125 & 3.0  & 12 & 8.0 & 37.8 & 29 & 3.0 &  52.5 &  3 & 133.0 & 34.6 &  2  & 200.5  & 49.6   \\ \hline
		\end{tabular}
	}
\end{table*}

\begin{figure}[t!]
	\centering
	\includegraphics[width=0.32\textwidth]{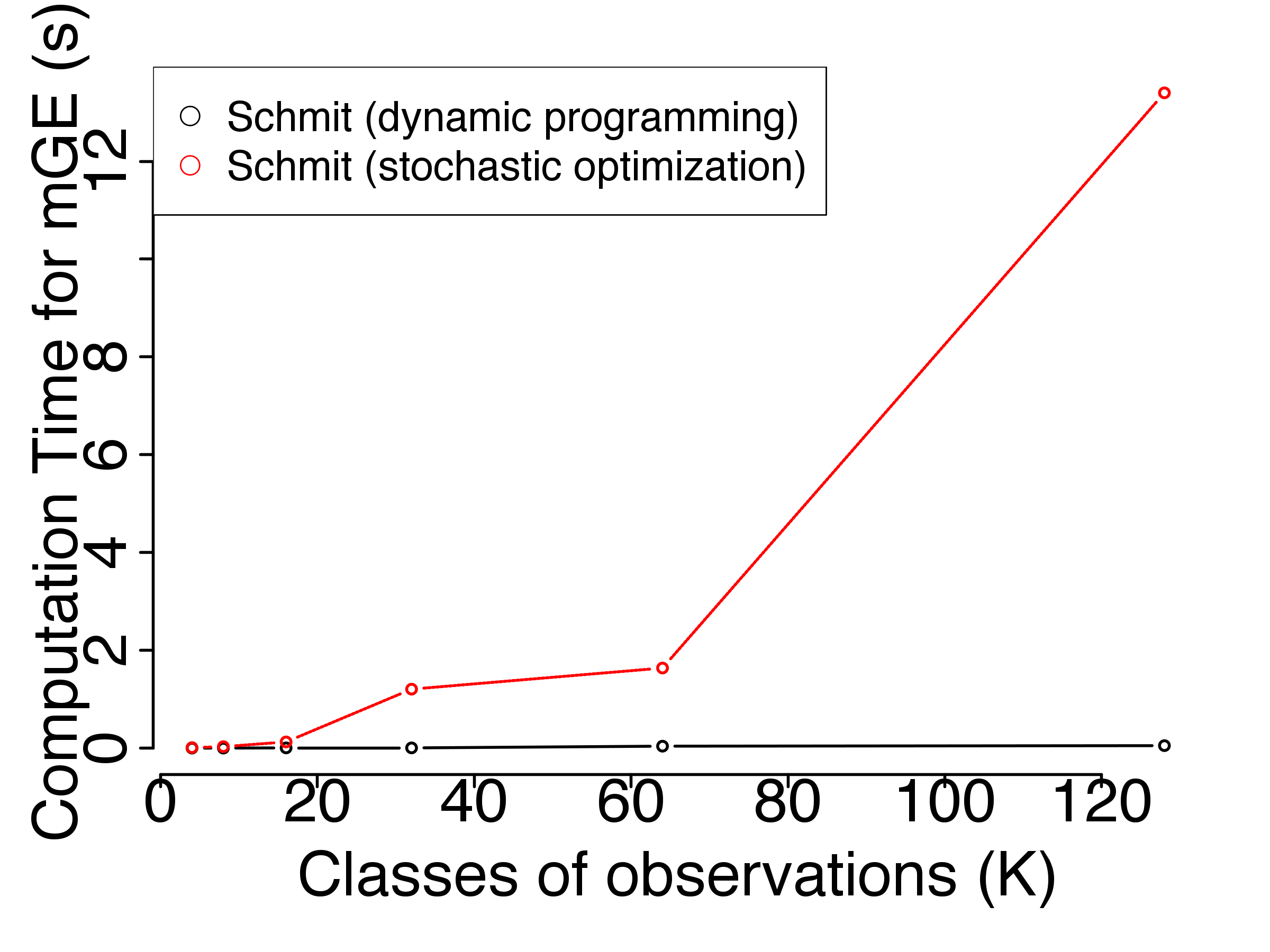}
	\includegraphics[width=0.32\textwidth]{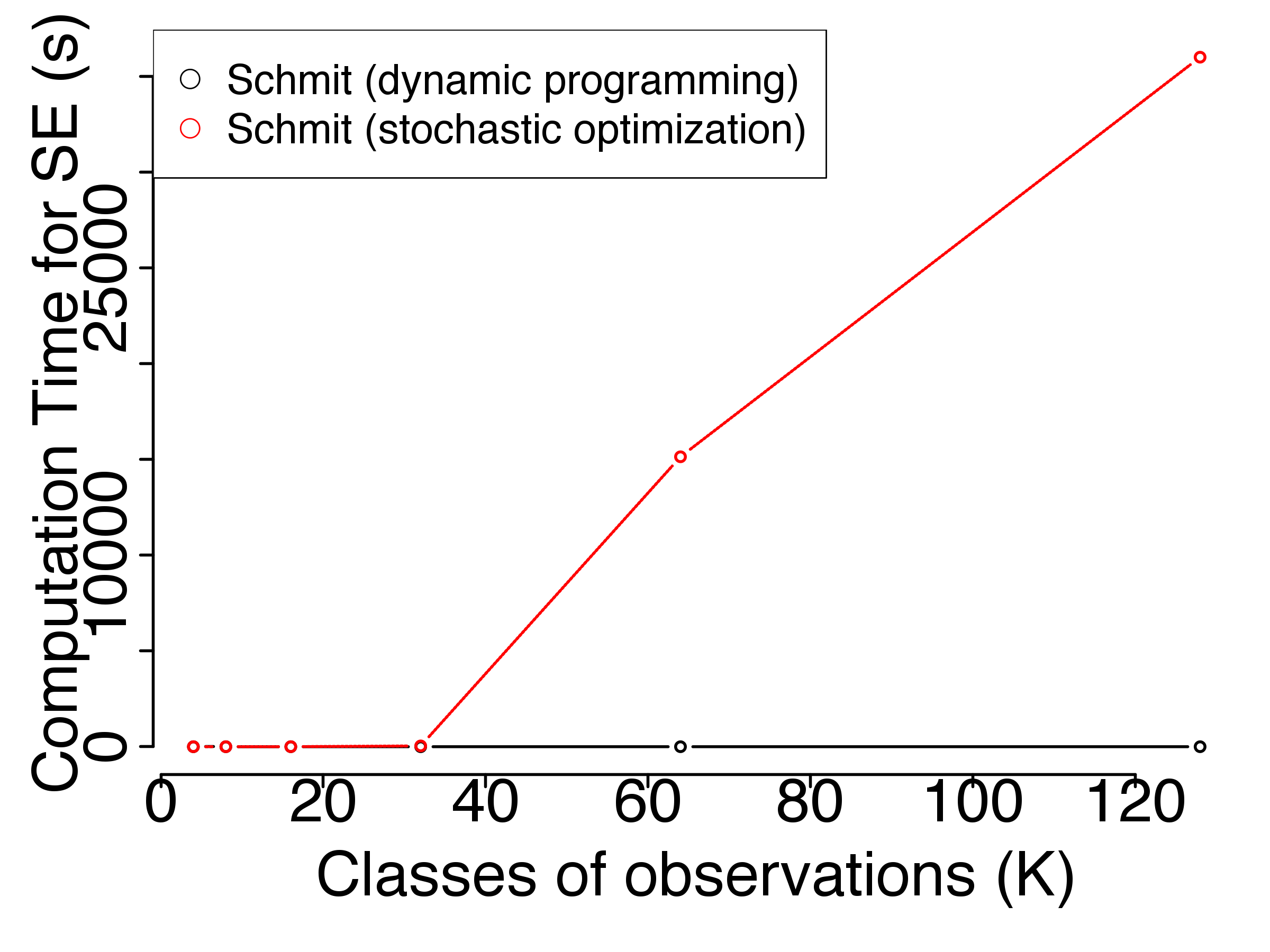}
	\includegraphics[width=0.32\textwidth]{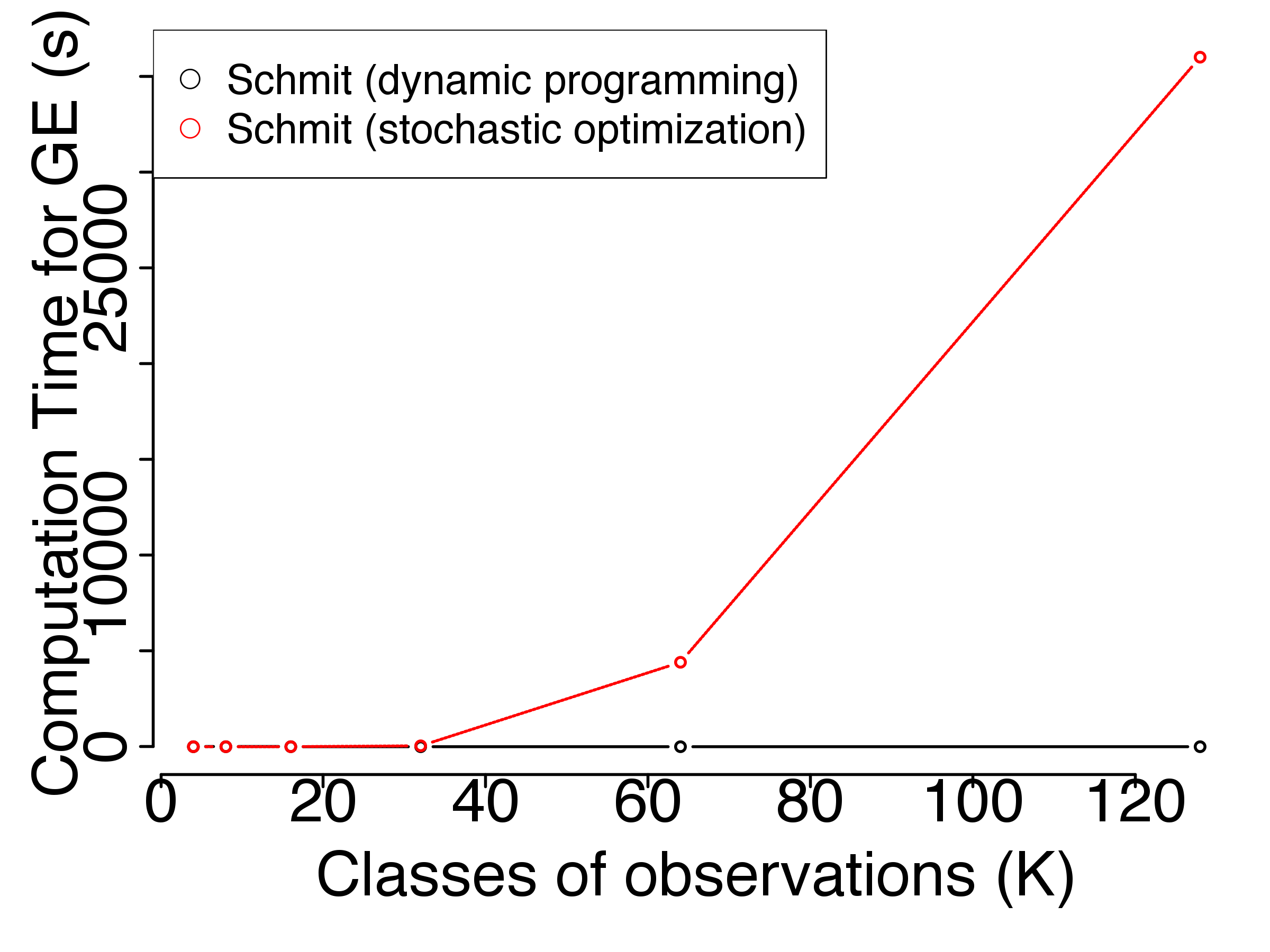}
	\caption{
		Computation time for synthesizing mitigation policy over
		Branch\_and\_Loop applications.
		Computation time for min-guess entropy (a) takes only few
		seconds. Computation time for the Shannon entropy (b)
		and guessing entropy (c) are expensive using Stochastic optimization.
		We set time-out to be 10 hours.
	}
	\label{fig:comp-time}
\end{figure}

\section{Case Study}
\label{sec:case}
\noindent\noindent\textbf{Research Question.}
Does \toolname scale well and improve the security of applications
(entropy measures) within the given performance bounds?

\noindent\textbf{Methodology.}
We use the deterministic and stochastic algorithms
for mitigating the leaks.
We show our results for the min-guess entropy, but other
entropy measures can be applied as well.
Since the task is to mitigate existing leakages,
we assume that the secret and public inputs
are given.

\noindent\textbf{Objects of Study.}
We consider four real-world applications:

\vspace{-0.05em}
\hskip-0.6cm\textbf{}
\begin{tabular}{lrrrrrr} \toprule
	                                  & Num.       & Num.  &  Num.  & $\epsilon$~~~~  & Initial.  & Initial.~~   \\
	Application                &  Methods & Secret & Public &                            & clusters & Min-guess
	\\ \midrule
	GabFeed                         & 573~~ & 1,105 & 65~~~~ & 6.50~~~  & 34~~~  & 1.0~~~
	\\
	Jetty                         		& 63~~~ & 800~~ & 635~~~ & 0.1~~~~  & 20~~~  & 4.5~~~
	\\
	Java Verbal Expressions & 61~~~ & 2,000 & 10~~~~ & 0.02~~~ & 9~~~~ &  50.5~~
	\\
	Password Checker   		  &  6~~~~ &  20~~~ & 2,620~~ & 0.05~~~ & 6~~~~   & 1.0~~~
	\\ \bottomrule
\end{tabular}
\vspace{-0.05em}

\noindent In the inset table, we show the basic characteristics of these benchmarks.

\noindent\textit{GabFeed} is a chat server
with 573 methods~\cite{gabfeed}. There is a side channel
in the authentication part of the
application where the application takes users' public keys and its own private key,
and generating a common key~\cite{ChenFD17}. The vulnerability
leaks the number of set bits in the secret key.
Initial functional observations are shown in Fig~\ref{fig:gf-clusters}. There are 34
clusters and min-guess entropy is 1. We aim to maximize the min-guess
entropy under the performance overhead of 50\%.

\noindent\textit{Jetty.} We mitigate the side channels in
\texttt{util.security} package of Eclipse Jetty web server. The package
has \texttt{Credential} class which had a timing side channel.
This vulnerability was analyzed in \cite{ChenFD17} and fixed initially
in~\cite{jetty-1}. Then, the developers noticed that the implementation
in~\cite{jetty-1} can still leak information and fixed this issue with a
new implementation in~\cite{jetty-2}. However, this new implementation
is still leaking information~\cite{FuncSideChan18}. We  apply
\toolname to mitigate this timing side channels.
Initial functional observations is shown in Fig~\ref{fig:jetty-clusters}.
There are 20 classes of observations and the initial min-guess entropy is 4.5.
We aim to maximize the min-guess entropy under the performance overhead of 50\%.

\noindent\textit{Java Verbal Expressions} is
a library with 61 methods that construct regular expressions~\cite{JavaVerbalExp}.
There is a timing side channel in the library similar to
password comparison vulnerability~\cite{Keyczar} if the library
has secret inputs. In this case, starting from the initial character of
a candidate expression, if the character matches with the regular expression,
it slightly takes more time to respond the request than otherwise. This
vulnerability can leak all the regular expressions.
We consider regular expressions to have a maximum size of 9.
There are 9 classes of observations and the initial min-guess entropy is 50.5. We aim to
maximize the min-guess entropy under the performance overhead of 50\%.

\noindent\textit{Password Checker.} We consider the password
matching example from loginBad program~\cite{antonopoulos2017decomposition}.
The password stored in the server is secret, and the user's guess
is a public input.
We consider 20 secret (lengths at most 6) and 2,620 public inputs.
There are 6 different clusters, and the initial min-guess entropy is 1.

\begin{figure}[t!]
\begin{tabular*}{\linewidth}{@{\extracolsep{\fill}}ccc}
\\
\subfloat[Classes of observations]{\label{fig:gf-clusters}
	\includegraphics[width=0.3\linewidth]{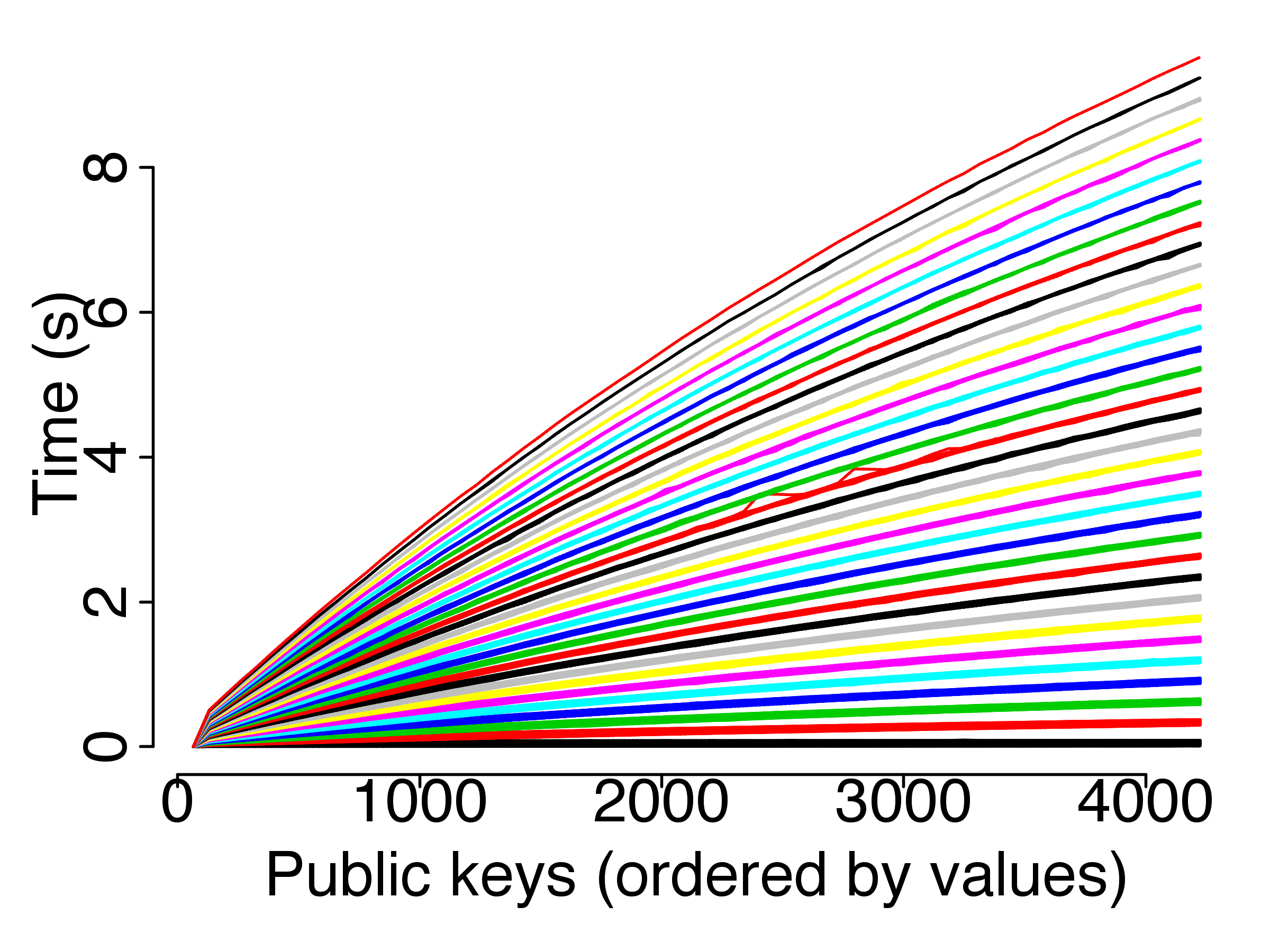}
}
&
\subfloat[Decision Tree]{\label{fig:gf-tree}
	\scalebox{0.5}{
	\vspace{-1em}
	\begin{tikzpicture}[align=center,node distance=0.8cm,->,thick,
	draw = black!60, fill=black!60]
	\centering
	\pgfsetarrowsend{latex}
	\pgfsetlinewidth{0.3ex}
	\pgfpathmoveto{\pgfpointorigin}

	\node[dtreenode,initial above,initial text={}] at (5,0) (l0)
	{OptimizedMultiplier.standard\\Multiply\_BasicBlock\_18};
	\node[dtreenode,below=of l0] (l2)
	{OptimizedMultiplier.standard\\Multiply\_BasicBlock\_18};
	\node[dtreenode,below=of l2] (l4)
	{OptimizedMultiplier.standard\\Multiply\_BasicBlock\_18};
	\node[below=of l4] (l6) {};

	\node[dtreeleaf,bicolor={black and black and 0.99},below left=of l0] (l1) {};
	\node[dtreeleaf,bicolor={red and red and 0.99},below left=of l2]  (l3) {};
	\node[dtreeleaf,bicolor={green and green and 0.99},below left=of l4] (l5) {};

	\path[->]  (l0) edge  node [left,pos=0.4] {$ = 3*y ~~$} (l1);
	\path  (l0) edge  node [right, pos=0.4] {$~~ \neq 3*y  $} (l2);
	\path  (l2) edge  node [left] {$ = 127*y ~~$} (l3);
	\path  (l2) edge  node [right] {$~~ \neq 127*y $} (l4);
	\path  (l4) edge  node [left] {$ = 251*y ~~$} (l5);
	\path  (l4) edge[dotted]  node [right] {$~~ \neq 251*y$} (l6);
	\end{tikzpicture}
	}
}
&
\subfloat[Mitigated Observations]{\label{fig:gf-mitigated}
	\includegraphics[width=0.30\linewidth]{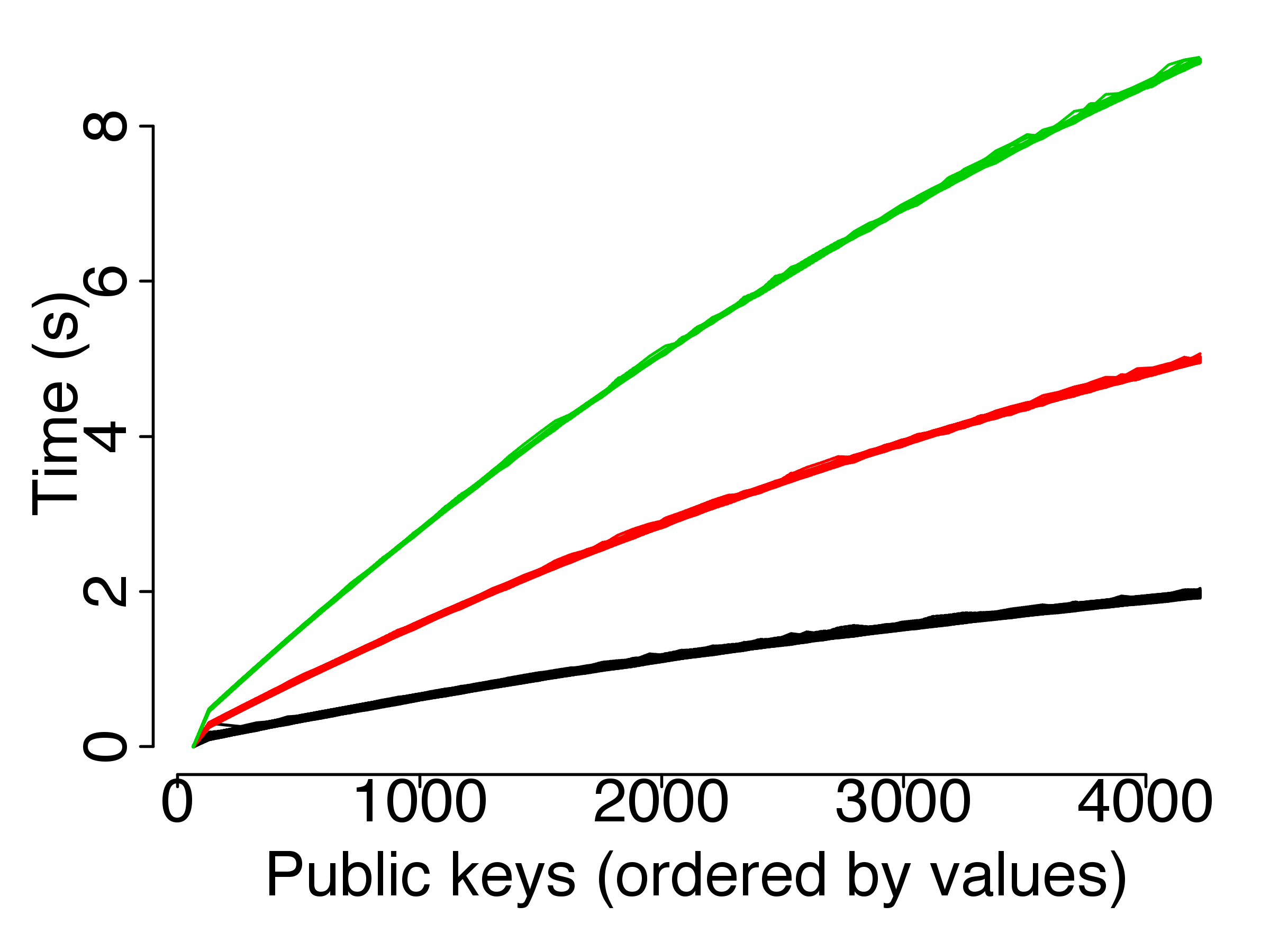}
}
\\
\subfloat[Classes of observations]{\label{fig:jetty-clusters}
	\includegraphics[width=0.3\linewidth]{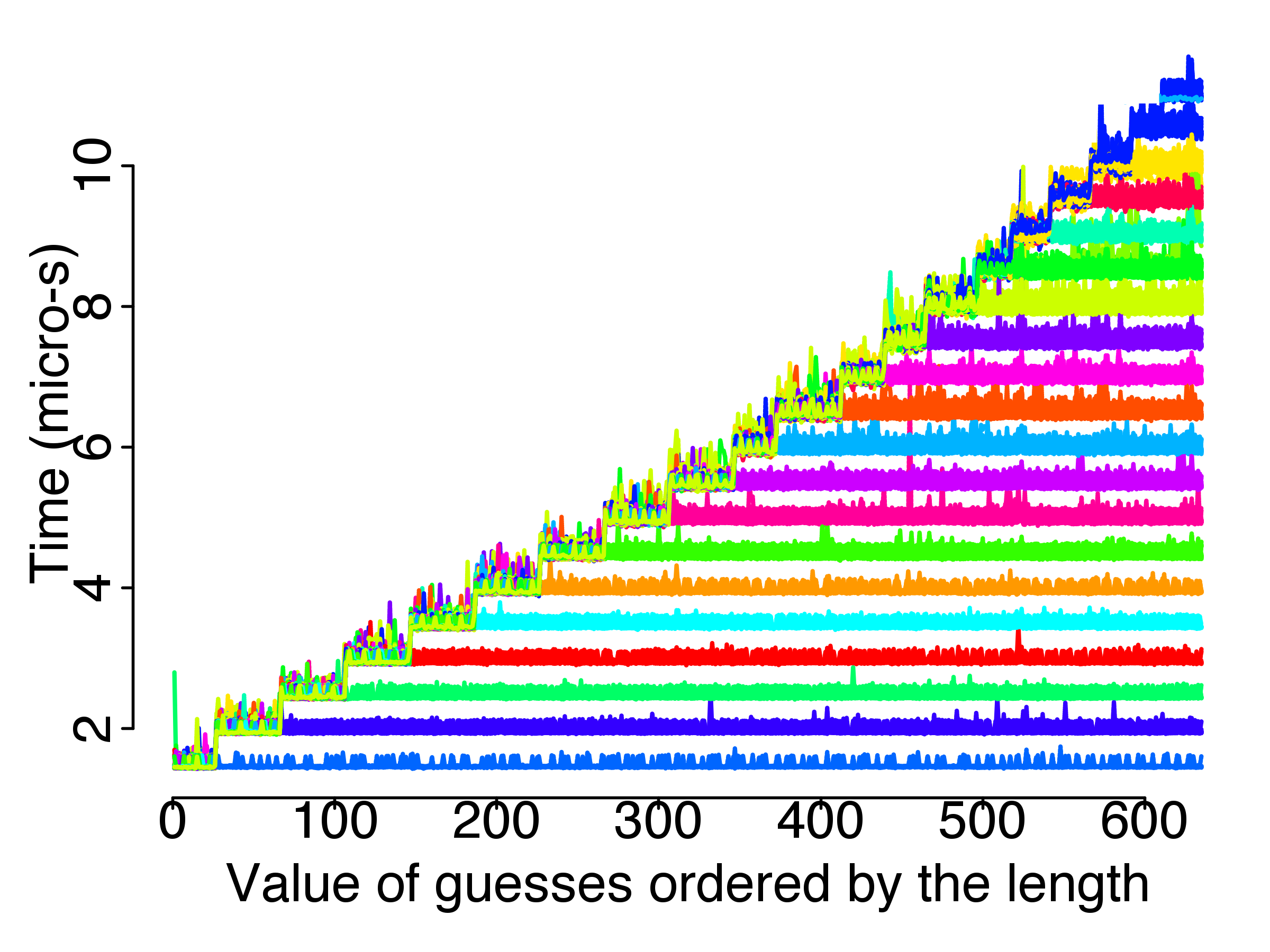}
}
&
\subfloat[Decision Tree]{\label{fig:jetty-tree}
	\scalebox{0.5}{
		\vspace{-1.0em}
    \begin{tikzpicture}[align=center,node distance=1cm,->,thick,
		draw = black!60, fill=black!60]
		\centering
		\pgfsetarrowsend{latex}
		\pgfsetlinewidth{0.3ex}
		\pgfpathmoveto{\pgfpointorigin}

		\node[dtreenode,initial above,initial text={}] at (0,0) (l0)  {
			jetty.util.security.\\Credential.stringEquals\_bblock\_106};
		\node[dtreenode,below=of l0] (l2)
		{jetty.util.security.\\Credential.stringEquals\_bblock\_106};
		\node[dtreenode,below=of l2] (l4)
		{jetty.util.security.\\Credential.stringEquals\_bblock\_106};
		\node[below=of l4] (l6) {};

		\node[dtreeleaf,bicolor={blue!60 and blue!60 and 0.99},below left=of
		l0] (l1) {};
		\node[dtreeleaf,bicolor={blue and blue and 0.99},below left=of l2]
		(l3) {};
		\node[dtreeleaf,bicolor={green and green and 0.99},below left=of
		l4] (l5) {};

		\path[->]  (l0) edge  node [left,pos=0.4] {$ = min(1,y) ~~$} (l1);
		\path  (l0) edge  node [right, pos=0.4] {$~~ \neq min(1,y) $} (l2);
		\path  (l2) edge  node [left] {$ = min(2,y) ~~$} (l3);
		\path  (l2) edge  node [right] {$~~ \neq min(2,y) $} (l4);
		\path  (l4) edge  node [left] {$ = min(3,y) ~~$} (l5);
		\path  (l4) edge[dotted]  node [right] {$~~ \neq min(3,y)$} (l6);
		\end{tikzpicture}
	}
}
&
\subfloat[Mitigated Observations]{\label{fig:jetty-mitigated}
	\includegraphics[width=0.30\linewidth]{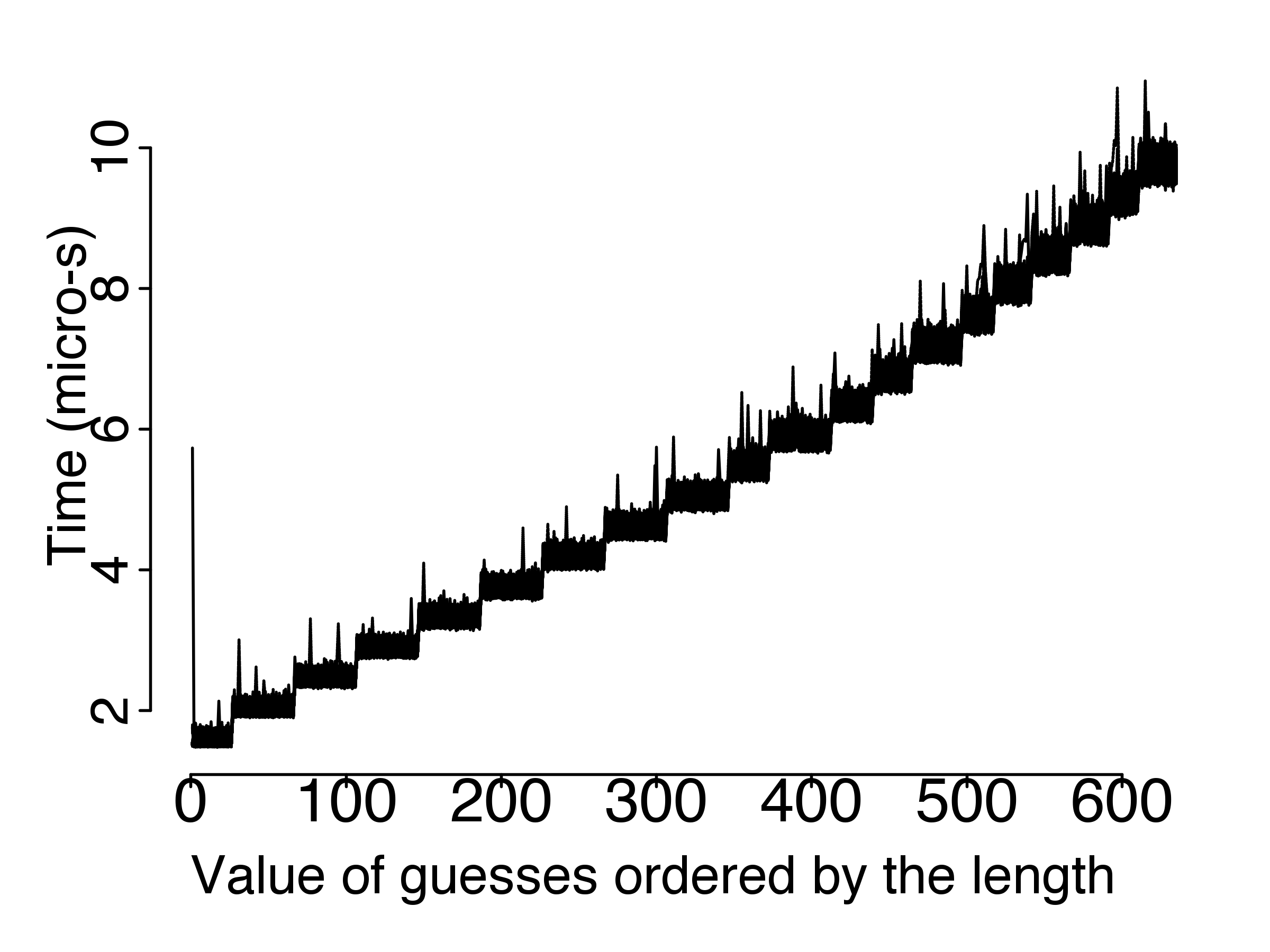}
}
\\
\subfloat[Classes of observations]{\label{fig:verbal-clusters}
	\includegraphics[width=0.3\linewidth]{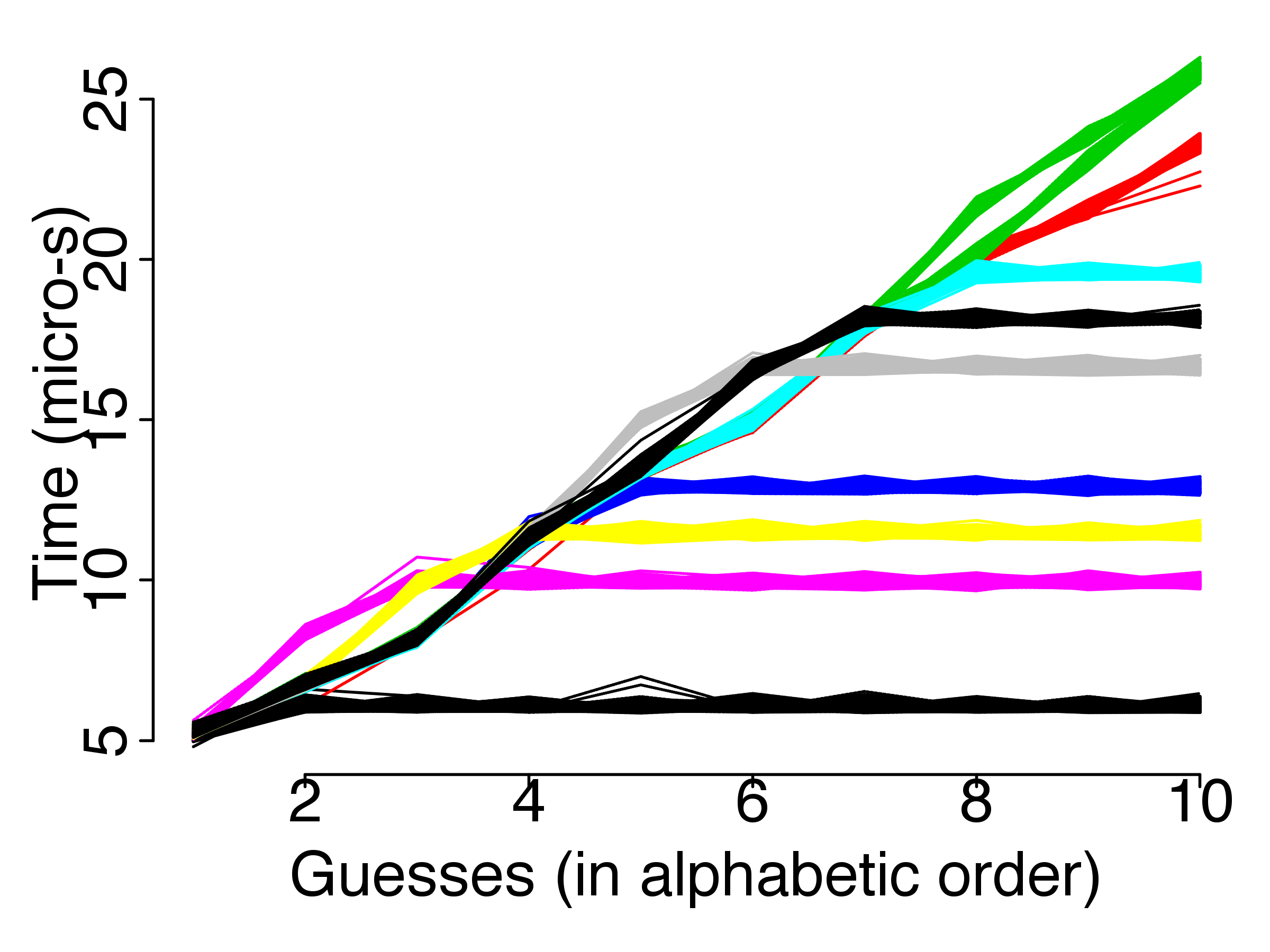}
}
&
\subfloat[Decision Tree]{\label{fig:verbal-tree}
	\scalebox{0.5}{
		\vspace{-1em}
		\begin{tikzpicture}[align=center,node distance=1cm,->,thick,
		draw = black!60, fill=black!60]
		\centering
		\pgfsetarrowsend{latex}
		\pgfsetlinewidth{0.3ex}
		\pgfpathmoveto{\pgfpointorigin}

		\node[dtreenode,initial above,initial text={}] at (0,0) (l0)  {
			verbalExp.example1\\.main.bblock\_1042};
		\node[dtreenode,below=of l0] (l2)
		{verbalExp.example1\\.main.bblock\_995};
		\node[dtreenode,below=of l2] (l4)
		{verbalExp.example1\\.main.bblock\_713};
		\node[below=of l4] (l6) {};

		\node[dtreeleaf,bicolor={black and black and 0.99},below left=of
		l0] (l1) {};
		\node[dtreeleaf,bicolor={gray and gray and 0.99},below left=of l2]
		(l3) {};
		\node[dtreeleaf,bicolor={cyan and cyan and 0.99},below left=of
		l4] (l5) {};

		\path[->]  (l0) edge  node [left,pos=0.4] {$ > 0 ~~$} (l1);
		\path  (l0) edge  node [right, pos=0.4] {$~~ = 0 $} (l2);
		\path  (l2) edge  node [left] {$ > 0 ~~$} (l3);
		\path  (l2) edge  node [right] {$~~ = 0 $} (l4);
		\path  (l4) edge  node [left] {$ > 0 ~~$} (l5);
		\path  (l4) edge[dotted]  node [right] {$~~ = 0$} (l6);
		\end{tikzpicture}
	}
}
&
\subfloat[Mitigated Observations]{\label{fig:verbal-mitigated}
	\includegraphics[width=0.30\linewidth]{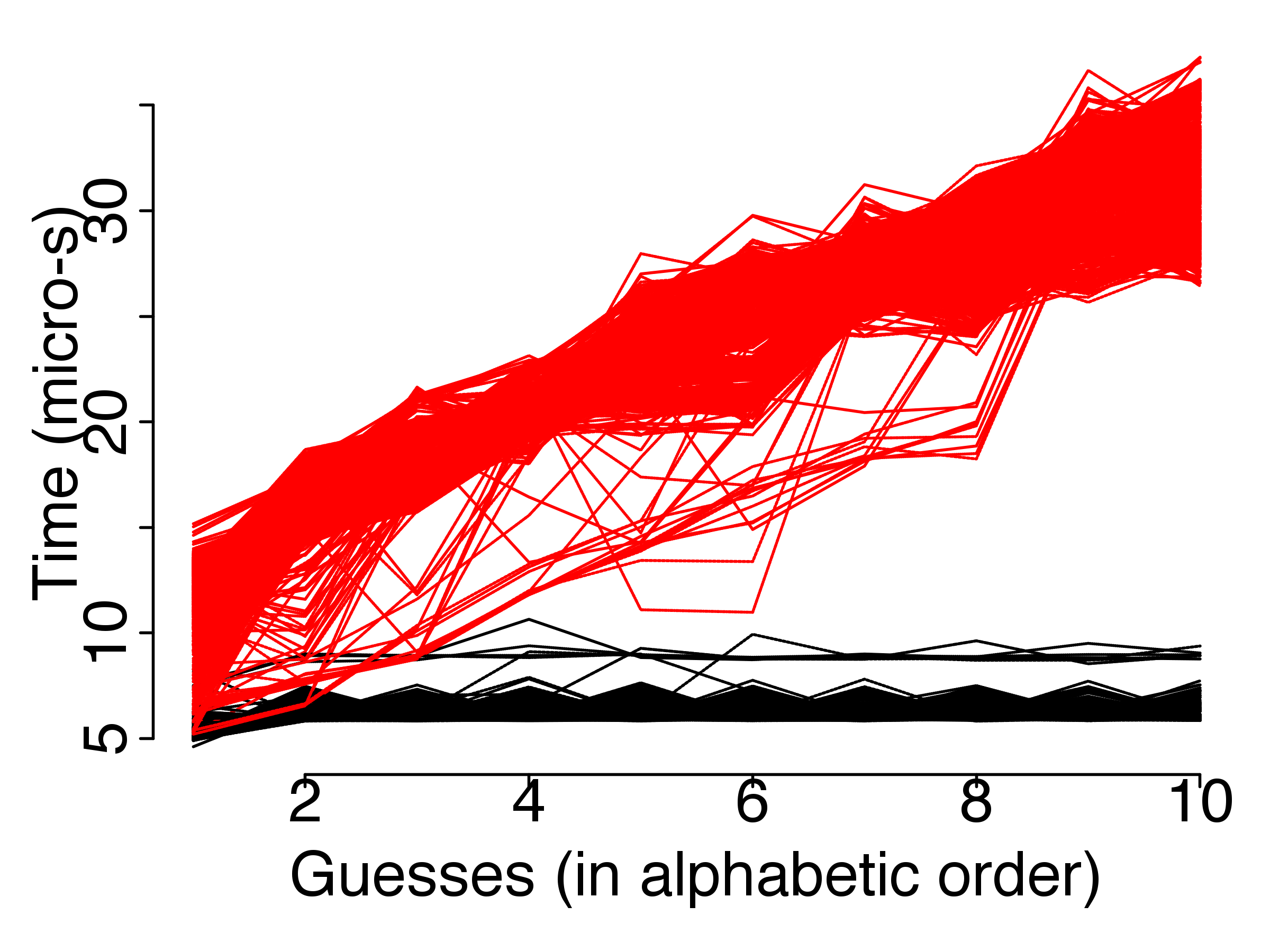}
}
\end{tabular*}
\caption{Initial functional observations, decision tree, and the mitigated observations
	from left to right for Gabfeed, Jetty, and Verbal Expressions
	from top to bottom.}
\label{fig:bigfig}
\end{figure}

\noindent\textbf{Findings for GabFeed.}
With the stochastic algorithm, \toolname calculates the mitigation
policy that results in 4 clusters.
This policy improves the min-guess entropy from 1 to 138.5
and adds an overhead of 42.8\%.
With deterministic algorithm, \toolname returns 3 clusters.
The performance overhead is 49.7\% and the min-guess
entropy improves from 1 to 106.
The user chooses the deterministic policy and enforces the mitigation.
We apply CART decision tree learning and characterizes
the classes of observations with GabFeed method calls
as shown in Fig~\ref{fig:gf-tree}.
The monitoring system uses the decision tree model and automatically
detects the current class of observation. Then, it adds extra delays
based on the mitigation policy to enforce it. The results of
the mitigation is shown in Fig~\ref{fig:gf-mitigated}.
Answer for our research question. \textit{Scalability}:
It takes about 1 second to calculate the
stochastic and the deterministic policies.
\textit{Security}:
Stochastic and deterministic variants improve the
min-guess entropy more than 100
times under the given performance overhead of 50\%, respectively.

\noindent\textbf{Findings for Jetty.}
The stochastic algorithm and the deterministic algorithm
find the same policy that results in 1 cluster with 39.6\%
performance overhead. The min-guess entropy improves from 4.5 to 400.5.
For the enforcement, \toolname first
uses the initial clusterings and specifies their characteristics with program
internals that result in the decision tree model shown in Fig~\ref{fig:jetty-tree}.
Since the response time is in the order
of micro-seconds, we transform the source code using
the decision tree model by adding extra counter variables.
The results of the mitigation is shown in Fig~\ref{fig:jetty-mitigated}.
\textit{Scalability}:
It takes less than 1 second to calculate the policies for both algorithms.
\textit{Security}:
Stochastic and deterministic variants improve the min-guess entropy 89
times under the given performance overhead.

\noindent\textbf{Findings for Java Verbal Expressions.}
For the stochastic algorithm, the policy results in 2 clusters,
and the min-guess entropy has improved to 500.5.
The performance overhead is 36\%.
For the dynamic programming, the policy results in 2 clusters.
This adds 28\% of performance overhead, while
it improves the min-guess entropy from 50.5 to 450.5.
The user chooses to use the deterministic policy for the mitigation.
For the mitigation, we transform the source code using
the decision tree model and add the extra delays based on the mitigation
policy.

\noindent\textbf{Findings for Password Matching.}
Both the deterministic and the stochastic algorithms result in
finding a policy with 2 clusters where the min-guess entropy has
improved from 1 to 5.5 with the performance overhead of 19.6\%.
For the mitigation, we transform the source code using
the decision tree model and add extra delays based on the mitigation
policy if necessary.

\section{Related Work}
\label{sec:related}
Quantitative theory of information have been widely used
to measure how much information is being leaked with side-channel
observations~\cite{smith2009foundations,KB07,backes2009automatic,heusser2010quantifying}.
Mitigation techniques increase the remaining entropy of secret sets
leaked through the side channels, while considering the performance
~\cite{kopf2009provably,askarov2010predictive,zhang2011predictive,zhang2012language,kadloor2012mitigating,schinzel2011efficient}.

K{\"o}pf and D{\"u}rmuth~\cite{kopf2009provably} use a
bucketing algorithm to partition programs' observations
into intervals. With the unknown-message threat model,
K{\"o}pf and D{\"u}rmuth~\cite{kopf2009provably} propose a dynamic programming
algorithm to find the optimal number of possible observations
under a performance penalty.
The works~\cite{askarov2010predictive,zhang2011predictive}
introduce different black-box schemes to mitigate leaks.
In particular, Askarov et al.~\cite{askarov2010predictive}
show the quantizing time techniques,
which permit events to release at scheduled constant slots,
have the worst case leakage if the slot is not filled with events.
Instead, they introduce the double scheme method that
has a schedule of predictions like the quantizing
approach, but if the event source fails to
deliver events at the predicted time, the
failure results in generating a new schedule
in which the interval between predictions is doubled.
We compare our mitigation technique with both algorithms throughout
this paper.

Elimination of timing side channels is a common technique
to guarantee the confidentiality of software~\cite{agat2000transforming,molnar2005program,wu2018eliminating,eldib2014synthesis,kopf2007transformational,mantel2015transforming}.
The work~\cite{wu2018eliminating} aims to eliminate side channels using
static analysis enhanced with various techniques to keep the performance
overheads low without guaranteeing the amounts of overhead. In contrast,
we use dynamic analysis and allow a small amount of information to leak,
but we guarantee an upper-bound on the performance overhead.

Machine learning techniques have been used for explaining
timing differences between traces
~\cite{song2014statistical,tizpaz2017discriminating,aaai18}.
Tizpaz-Niari et al.~\cite{aaai18} consider performance issues in
softwares. They also cluster execution times of programs and
then explain what program properties distinguish
the different functional clusters. We adopt their techniques
for our security problem.

\paragraph{Acknowledgements.}
The authors would like to thank Mayur Naik for shepherding our paper and
providing useful suggestions.
This research was supported by DARPA under agreement FA8750-15-2-0096.

\clearpage

\bibliography{papers}

\clearpage

\section{Appendix}

\subsection{Overview of \toolname}
\label{exp:overview}

\toolname consists of three components:

\noindent\textbf{1) Initial Security Analysis.}

\noindent Inspired by~\cite{FuncSideChan18},
for each secret value, we use B-spline
basis~\cite{ramsay2006functional} in general to model
arbitrary timing functions of secret values in the domain
of public inputs, but we also allow simpler functional models
such as polynomial functions. We use the
non-parametric functional clustering~\cite{ferraty2006nonparametric}
with hierarchal algorithms~\cite{johnson1967hierarchical}
to obtain the initial classes of observations or clusters.
The clustering algorithm groups timing functions that are
$\epsilon$ close to each other in the same cluster.
The size of class is the number of secret values
in the cluster. The $l-$norm distance
between clusters forms the penalty matrix.

\noindent\textit{Highlight.}
This step finds the classes of observations over
secret values using functional clustering and returns
the label (cluster) of each secret value and
the distance (as a penalty) between the clusters.

\begin{figure}
	\centering
	\includegraphics[width=1.0\textwidth]{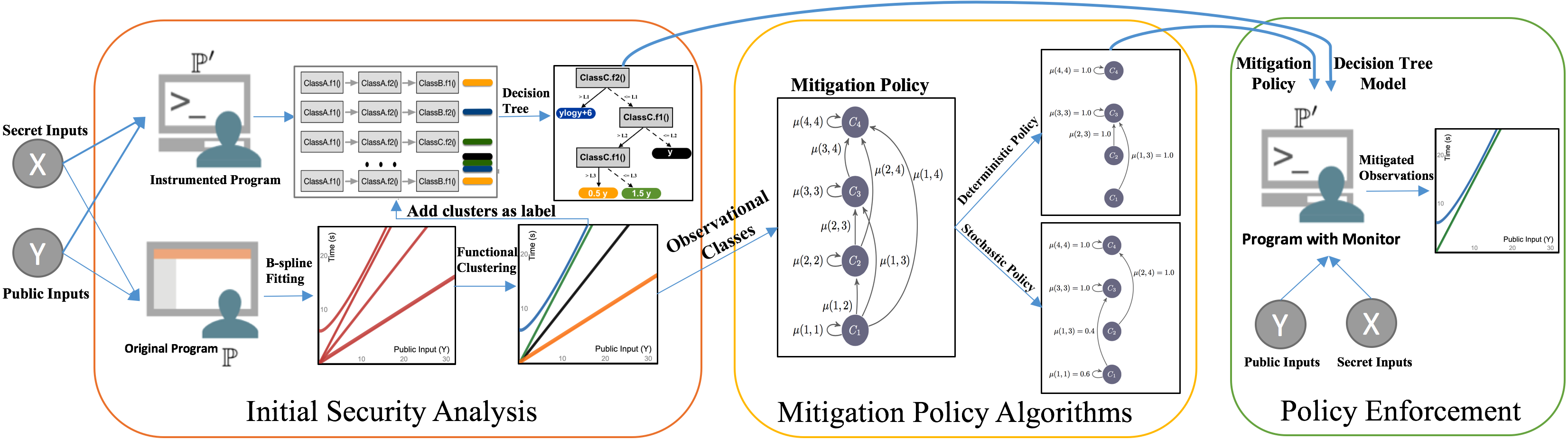}
	\caption{
		\toolname work-flow. \toolname consists of three components.
	}
	\label{fig:schmit-work-flow}
\end{figure}

\noindent\textbf{2) Mitigation policy.}

\noindent We uses the policy algorithms (Section~5)
to calculate the mitigation policy given the clusters,
their sizes, and their distances.
We use two types of algorithms: deterministic and
stochastic. The deterministic algorithm is an instance
of dynamic programming implemented in Java.
The stochastic algorithms have three variants for three
types of information theory measure. The variant
based on Min-guess entropy is the main emphasis
in this paper that implemented using Gurobi~\cite{gurobi}.
The two other variants (for Shannon and Guessing entropies) are implemented
in python using Scipy library~\cite{scipy}. See Section~6(C) for further details.

\noindent\textit{Highlight.}
This step calculates the mitigation policy that shows
how to merge different clusters to maximize an information
theory criterion given an upper-bound on the amount of
performance overhead.

\noindent\textbf{3) Enforcement of mitigation policy.}

\noindent In the first step, we characterize each class of observation with
program internal properties.
We use decision tree algorithms~\cite{Breiman/1984/CART}
to characterize each class of observation with corresponding
program internal features.
Fig~2(b) in Section~2 is an example of decision tree model
that characterizes each class of observation of Fig~1(b) in Section~2
with the basic block invocations at line 16 of \texttt{modExp} method.
In the second step, we enforce the mitigation policy.
This step can be done either with a monitoring system at run-time
automatically or with a source code transformation semi-automatically.
The enforcement uses the decision tree model and matches the
properties enabled during an execution with the tree model.
Then, it adds extra delays, based on the mitigation policy,
to the execution in order to enforce the mitigation policy.
The result of mitigation can be verified by applying the clustering
algorithm on the mitigated execution times.

\noindent\textit{Highlight.}
This step uses the functional clusters and the decision
tree model and enforces the mitigation policy either
with a monitoring system at run-time or souce code
transformations. The clustering algorithm over the
mitigated execution times can be used to verify the
mitigation model.

\subsection{Exponential blow-up for functional mitigations.}
Figure~\ref{fig:mitigation-2} shows possible observational
classes $C_S$ for $S \not= \emptyset \subseteq \set{1, 2, 3}$
for three observation classes $C_1, C_2$, and $C_3$.
The cluster $C_S$ corresponds to the observation class with
the execution-time equals to the upper-envelope of all functions
from the classes in $S$.

\begin{figure}[!t]
	\centering
  \includegraphics[width=0.48\textwidth]{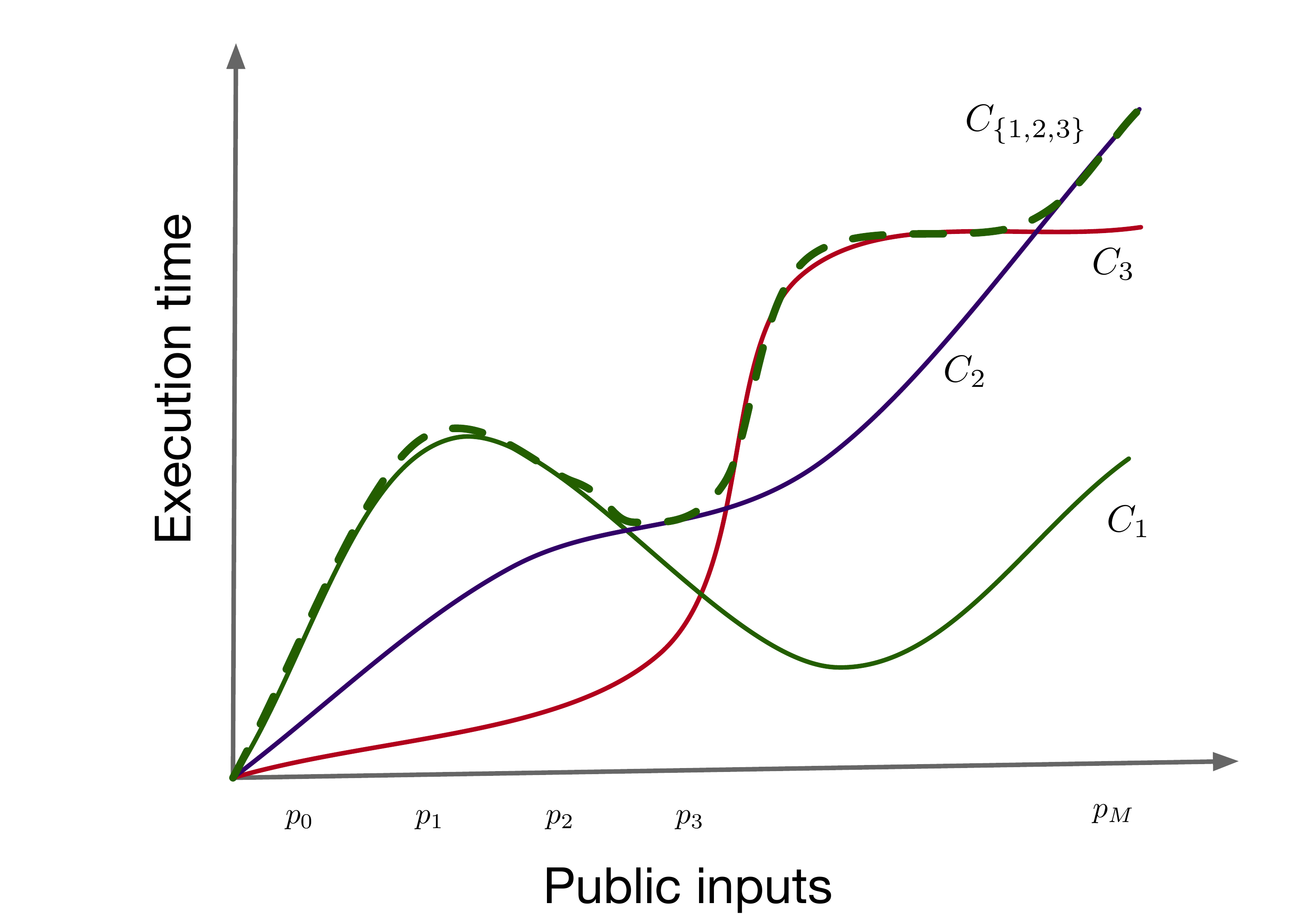}\label{fig:mitigation-1}
  \includegraphics[width=0.48\textwidth]{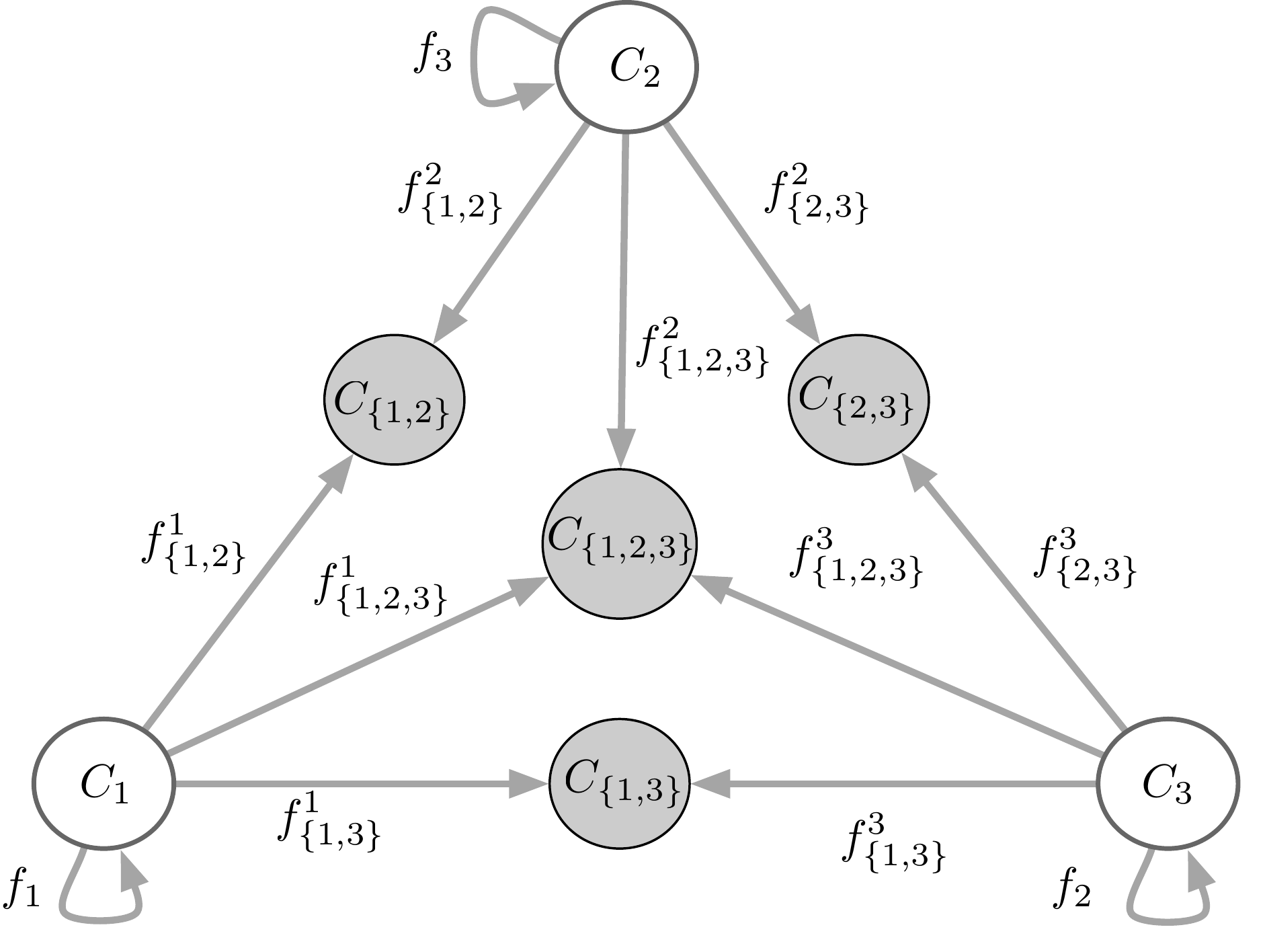}\label{fig:mitigation-2}
	\caption{
		(a) A program with three functional observations $C_1, C_2$ and
		$C_3$. The dashed curve $C_{1,2,3}$ is the upper-envelope of functions
		$C_1$, $C_2$, and $C_3$.
		(b) All possible resulting combinations of functional observations resulting
		from adding delays to various clusters.
	}
\end{figure}

\begin{figure}[t!]
	\centering
	\includegraphics[width=0.32\textwidth]{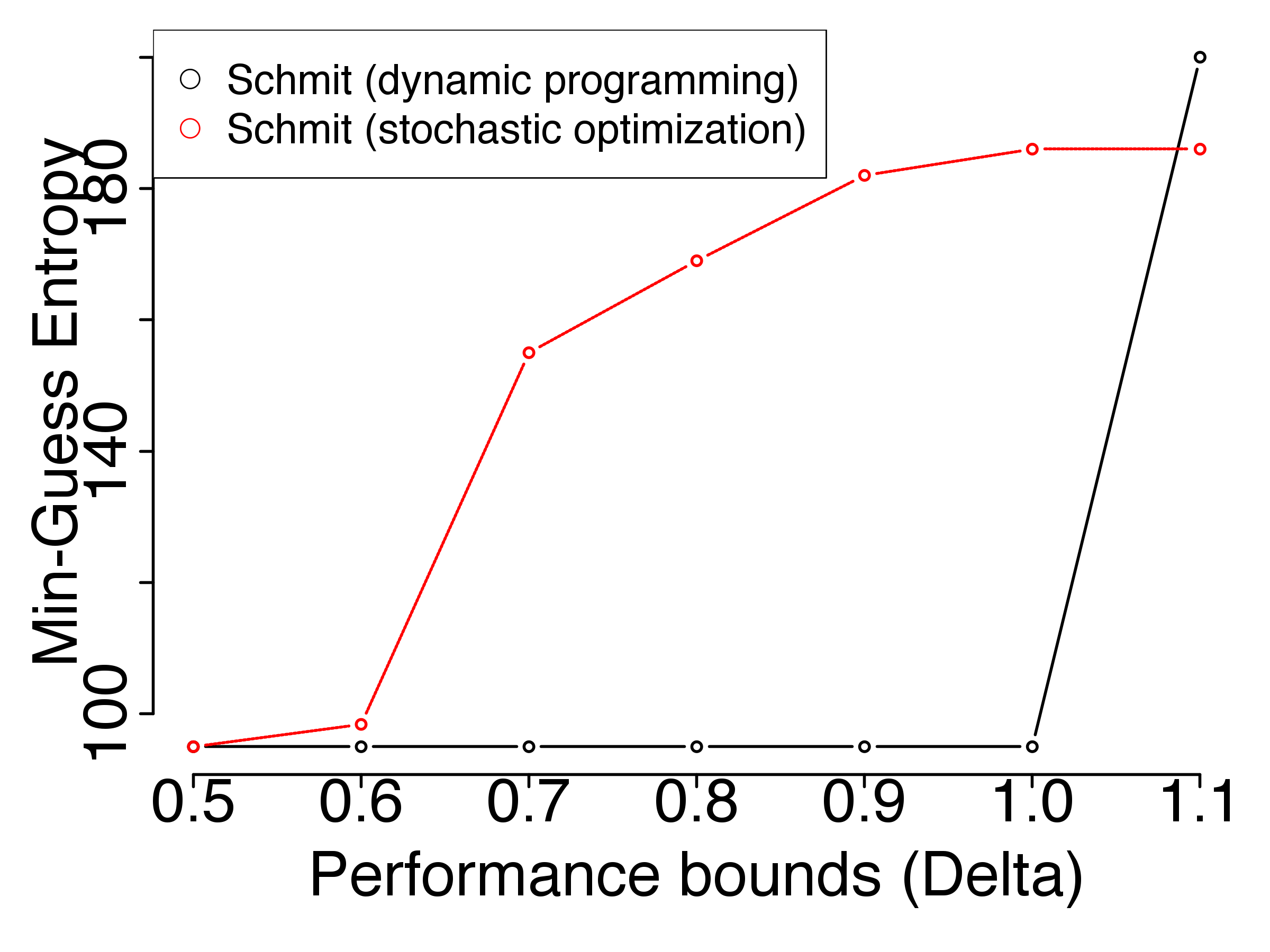}
	\includegraphics[width=0.32\textwidth]{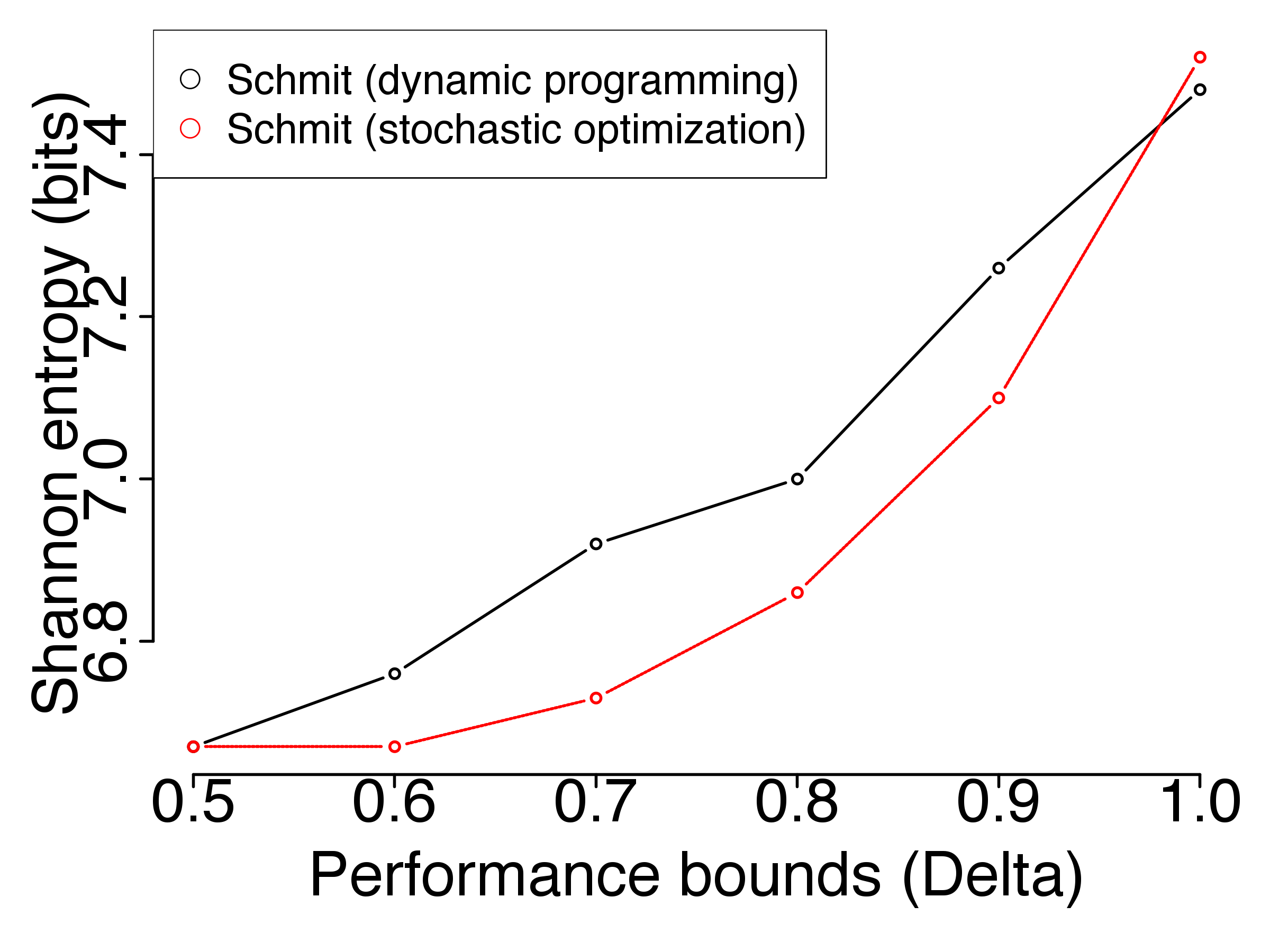}
	\includegraphics[width=0.32\textwidth]{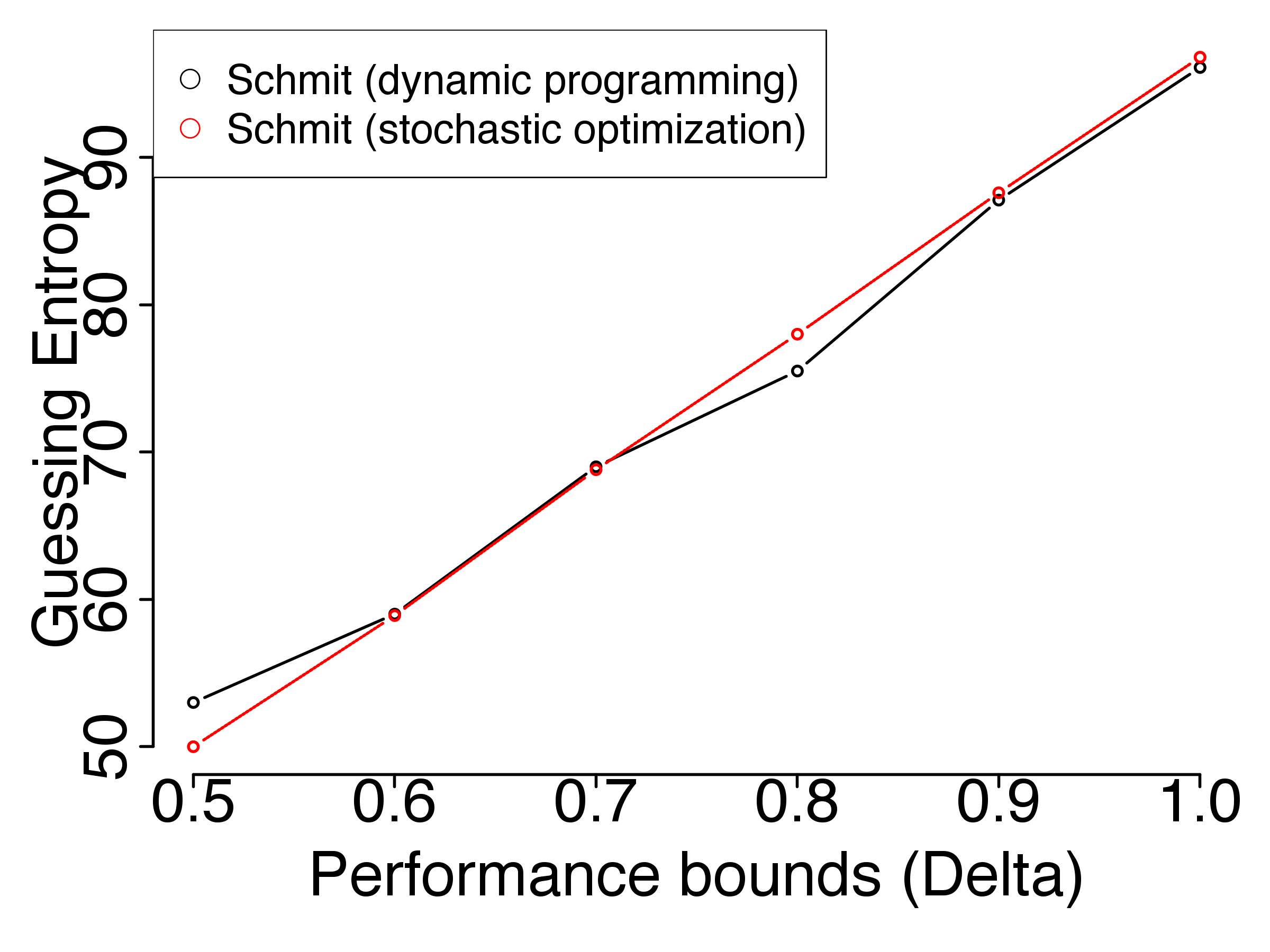}
	\caption{Entropy values versus performance overhead bounds on
		Branch\_and\_loop\_4.
		(a) Min-guess entropy, (b) Shannon Entropy, (c) Guessing entropy.
	}
	\label{fig:comp-entropy-delta}
\end{figure}

\subsection{Performance bounds versus entropy measures.}
Fig~\ref{fig:comp-entropy-delta} shows the relations between relaxing
the performance bound and the entropy values.
For min-guess entropy, Fig~\ref{fig:comp-entropy-delta}(a) shows
that the stochastic optimization improves the entropy gradually from 95
to 186 by relaxing the bound. However, the dynamic programming
has only improved when the performance bound exceeds 1.0.
For the Shannon and guessing entropy, Fig~\ref{fig:comp-entropy-delta}(b) shows
how \toolname improves the entropy with relaxing the performance bounds.

\end{document}